\newcommand{\rd}{\mathrm{d}}
\newcommand{\beq}{\begin{equation}}
\newcommand{\eeq}{\end{equation}}
\newcommand{\bea}{\begin{eqnarray}}
\newcommand{\eea}{\end{eqnarray}}
\newcommand{\nn}{\nonumber}
\newcommand{\bk}{\begin{cases}}
\newcommand{\ek}{\end{cases}}
\newtheorem{theorem}{Theorem}
\newtheorem{corollary}[theorem]{Corollary}
\newtheorem{definition}[theorem]{Definition}
\newtheorem{example}[theorem]{Example}
\newtheorem{lemma}[theorem]{Lemma}
\newtheorem{proposition}[theorem]{Proposition}
\newtheorem{remark}[theorem]{Remark}
\DeclareMathOperator{\rank}{rank}
\begin{document}

\title[Haantjes Algebras and Diagonalization]{Haantjes Algebras and Diagonalization}

\author{Piergiulio Tempesta}
\address{Departamento de F\'{\i}sica Te\'{o}rica II, Facultad de F\'{\i}sicas, Universidad
Complutense, 28040 -- Madrid, Spain
 and Instituto de Ciencias Matem\'aticas, C/ Nicol\'as Cabrera, No 13--15, 28049 Madrid, Spain.}
\email{p.tempesta@fis.ucm.es, piergiulio.tempesta@icmat.es}
\author{Giorgio Tondo}
\address{Dipartimento di Matematica e Geoscienze, Universit\`a  degli Studi di Trieste,
piaz.le Europa 1, I--34127 Trieste, Italy.}
\email{tondo@units.it}
\date{November 05, 2020}

\subjclass[2010]{MSC: 53A45, 58C40}

\keywords{Haantjes tensors, Haantjes manifolds, generalized Nijenhuis torsions}

\maketitle

\begin{abstract}
We introduce the notion of Haantjes algebra: It consists of an assignment of a family of operator fields on a differentiable manifold,
each of them with vanishing Haantjes torsion. They are also required to satisfy suitable compatibility conditions. Haantjes algebras naturally generalize several known interesting geometric structures, arising in Riemannian geometry and in the theory of integrable systems. At the same time, as we will show, they play a crucial role in the theory of diagonalization of operators on differentiable manifolds.
Assuming that the operators of a Haantjes algebra are semisimple and commute, we shall prove that there exists a set of local coordinates where all operators can be diagonalized simultaneously.
Moreover, in the general, non-semisimple case, they acquire simultaneously, in a suitable local chart, a block-diagonal form.
\end{abstract}




\tableofcontents

\section{Introduction}

The purpose of this paper is to introduce a new geometric--algebraic structure,  based on the notion of Haantjes torsion, that we shall call \textit{Haantjes algebra}.
The Haantjes torsion was introduced in 1955 by J. Haantjes in~\cite{Haa}, as a natural generalization of the torsion defined by Nijenhuis in~\cite{Nij}.
The theory of  tensor fields with vanishing Nijenhuis torsion has been intensively investigated in the last forty years, mainly due  to its relevance in the theory  of almost complex structures~\cite{NN} and its applications to the theory of integrable systems and separation of variables, where they
are usually called recursion operators~\cite{GVY,DeF,MV,IMM,FP,BKM}. However, quite surprisingly, the relevance of Haantjes's differential--geometric work  has not been recognized for a long time, except for  some  notable applications to Hamiltonian
systems of hydrodynamic type~\cite{Mokhov,BogJMP,BogJMP2,FeMa, GOS}.
For a nice review of  classical and more recent results regarding the theory of Nijenhuis and Haantjes tensors,  see Ref.~\cite{YKS}.

Our work is inspired, one the one hand, by the construction of Haantjes manifolds proposed in~\cite{MFrob,MGall13,MGall17}; on the other hand, by the concept  of  $\omega \mathcal{H}$ (or symplectic-Haantjes) manifolds that we
have recently introduced in~\cite{TT2016prepr} in connection with the theory of classical integrable
systems. In this context, $\omega \mathcal{H}$ structures provide us with a natural theoretical framework for dealing with the integrability and
separability properties of Hamiltonian systems, which  parallels and completes  the approach offered by the Nijenhuis geometry.

In this article, we will extend the previous ideas by proposing the general and abstract notion of Haantjes algebras. They consist essentially of a differentiable manifold $M$  endowed with a family of endomorphisms of the tangent bundle
with vanishing Haantjes torsion, which are compatible with each other. In this more general framework, the existence of an underlying symplectic structure is no longer required. In this work, we shall focus on operators with \textit{real} eigenvalues only.

A Haantjes algebra is a flexible tool that can be specialized to treat many different interesting constructions in a natural and unified language.
For example, Magri's Haantjes manifolds are a specially relevant instance of Haantjes algebras. Another important class of Haantjes algebras is represented by
 the \emph{Killing--St\"ackel algebras} introduced in~\cite{BCR02} on a Riemannian manifold, with the aim to characterize separation of variables in classical Hamiltonian systems.

However,  apart from the intrinsic interest of  structures combining the Haantjes geometry with  symplectic or Riemannian geometry, our main motivation is the abstract problem of the
\textit{diagonalization} of operators on a differentiable manifold. Indeed, we shall prove that the algebras of Haantjes  operators introduced in Section 4 can be diagonalized simultaneously
in a suitable local coordinate system, called a \textit{Haantjes chart}. Note that no hermiticity assumption is needed: we only require that the Haantjes operators are a set of pointwise diagonalizable commuting operators.
Since the vanishing of the Haantjes torsion of an operator is a fourth-degree requirement in its components (see formulae \eqref{eq:HaanEx} and \eqref{eq:HaanLocal}), a very large class of tensor fields satisfies it.
In Theorem \ref{TH} we prove the following

\textbf{Main Result.}
Given a semisimple Abelian  Haantjes algebra $\mathscr{H}$ on a manifold $M$, that is   an algebra of semisimple commuting  operator fields  with vanishing Haantjes torsion, there exist sets of local coordinates on $M$ in which all the operators can be simultaneously diagonalized.

Conversely, let $\mathcal{K}$  be a family of commuting semisimple operator fields. If they share a set of local coordinates in which they take simultaneously a diagonal form, then they generate a semisimple Abelian Haantjes algebra.

The previous result can also be extended to the very general (but much less explored) case of \textit{non-semisimple} Haantjes operators. Indeed, for this class we shall prove that there exists a local coordinate system where all the operators of a set  of commuting Haantjes operators acquire simultaneously a  block-diagonal form.

In this article a new, infinite ``tower'' of \textit{generalized Nijenhuis torsions} of level $n$ for all $n\in\mathbb{N}$ is also defined. The geometrical meaning of our notion, which naturally generalizes both the classical
Nijenhuis and Haantjes torsions,  has been discussed in detail in Ref.~\cite{TT2019prepr}.

The paper is organized as follows. After a discussion, in Section 2, of the main algebraic structures relevant for this work, including the generalized torsions,  we present in Section 3 a brief introduction to the geometry of Nijenhuis and Haantjes tensors. Also, some new results concerning non-semisimple Haantjes operators are proposed. They allow us to derive the classical Haantjes theorem \cite{Haa} from a new perspective. In Section 4, we discuss the block diagonalization of commuting Haantjes operators. In Section 5, the formal construction of a Haantjes algebra is introduced and the cyclic case is discussed. The results concerning the simultaneous diagonalization of semisimple commuting Haantjes operators are also proved. In Section 6, the Haantjes structure for the classical Coulomb--Kepler system is discussed. A related example, illustrating the application of the main theorems and  the theory of cyclic generators of semisimple Haantjes algebras, is presented in Section 7.

A comparative discussion of other geometric structures related with Haantjes algebras is proposed in the final Section 8.

\section{Nijenhuis  and Haantjes operators}

\label{sec:1}
 The \textit{natural frames} of vector fields associated with local coordinates on a differentiable manifold, being obviously integrable, can be characterized in a \textit{tensorial manner} as  eigen-distributions of a suitable class of  $(1,1)$ tensor fields, i.e.~the ones with vanishing Nijenhuis or Haantjes torsion. In this section, we review some basic algebraic results concerning the theory of such tensors. For a more complete treatment, see
the original papers~\cite{Haa,Nij}  and the related ones~\cite{Nij2,FN}.
\subsection{Algebraic preliminaries}

Let $M$ be a real differentiable manifold and $\boldsymbol{L}:TM\rightarrow TM$ a smooth $(1,1)$ tensor field, i.e.,~a field of linear operators on the tangent space at each point of $M$. In the following, all tensors will be assumed to be smooth.
\begin{definition}
The
 \textit{Nijenhuis torsion} of $\boldsymbol{L}$ is  defined by the vector-valued $2$-form
\begin{equation}
 \label{eq:Ntorsion}
\mathcal{T}_ {\boldsymbol{L}} (X,Y):= \boldsymbol{L}^2[X,Y] +[\boldsymbol{L}X,\boldsymbol{L}Y]-\boldsymbol{L}\Bigl([X,\boldsymbol{L}Y]+[\boldsymbol{L}X,Y]\Bigr),
\end{equation}
where $X,Y \in TM$ and $[ \ , \ ]$ denotes the commutator of two vector fields.
\end{definition}
In local coordinates $\boldsymbol{x}=(x^1,\ldots, x^n)$, the Nijenhuis torsion can be written as the skew-symmetric  $(1,2)$ tensor field
\begin{equation}
\label{eq:NtorsionLocal}
(\mathcal{T}_{\boldsymbol{L}})^i_{jk}=\sum_{\alpha=1}^n\biggl(\frac{\partial {\boldsymbol{L}}^i_k} {\partial x^\alpha} {\boldsymbol{L}}^\alpha_j -\frac{\partial {\boldsymbol{L}}^i_j} {\partial x^\alpha} {\boldsymbol{L}}^\alpha_k+\Bigl(\frac{\partial {\boldsymbol{L}}^\alpha_j} {\partial x^k} -\frac{\partial {\boldsymbol{L}}^\alpha_k} {\partial x^j}\Bigr) {\boldsymbol{L}}^i_\alpha \biggr)\ ,
\end{equation}
having  $n^2(n-1)/2$ independent components.
\begin{definition}
 The \textit{Haantjes torsion} associated with $\boldsymbol{L}$ is the vector-valued $2$-form defined by
\begin{equation}
 \label{eq:Haan}
\mathcal{H}_{\boldsymbol{L}}(X,Y) := \boldsymbol{L}^2\mathcal{T}_{\boldsymbol{L}}(X,Y)+\mathcal{T}_{\boldsymbol{L}}(\boldsymbol{L}X,\boldsymbol{L}Y)-\boldsymbol{L}\Bigl(\mathcal{T}_{\boldsymbol{L}}(X,\boldsymbol{L}Y)+\mathcal{T}_{\boldsymbol{L}}(\boldsymbol{L}X,Y)\Bigr).
\end{equation}
\end{definition}
Explicitly, one can also write~\cite{YKS}
\begin{eqnarray}
 \label{eq:HaanEx}
\nn \mathcal{H}_{\boldsymbol{L}}(X,Y)&=&\boldsymbol{L}^4 [X,Y] -2\boldsymbol{L}^3\Bigl([X,\boldsymbol{L}Y]+[\boldsymbol{L}X , Y]\Bigr)
+\boldsymbol{L}^2\Bigl( [X, \boldsymbol{L}^2 Y]+4\, [\boldsymbol{L}X,\boldsymbol{L}Y]+[\boldsymbol{L}^2X,Y]\Bigr)\\ \nn
&-&2 \boldsymbol{L}\Bigl([\boldsymbol{L}X,\boldsymbol{L}^2Y]+[\boldsymbol{L}^2X,\boldsymbol{L}Y]
\Bigr)+[\boldsymbol{L}^2X,\boldsymbol{L}^2Y] 
\ . \\
\end{eqnarray}

The skew-symmetry of the Nijenhuis torsion implies that the Haantjes torsion is also skew-symmetric.
Its local expression is
\begin{equation}
\label{eq:HaanLocal}
(\mathcal{H}_{\boldsymbol{L}})^i_{jk}=  \sum_{\alpha,\beta=1}^n\biggl(
\boldsymbol{L}^i_\alpha \boldsymbol{L}^\alpha_\beta(\mathcal{T}_{\boldsymbol{L}})^\beta_{jk}  +
(\mathcal{T}_{\boldsymbol{L}})^i_{\alpha \beta}\boldsymbol{L}^\alpha_j \boldsymbol{L}^\beta_k-
\boldsymbol{L}^i_\alpha\Bigl( (\mathcal{T}_{\boldsymbol{L}})^\alpha_{\beta k} \boldsymbol{L}^\beta_j+
 (\mathcal{T}_{\boldsymbol{L}})^\alpha_{j \beta } \boldsymbol{L}^\beta_k \Bigr)
 \biggr) \ ,
\end{equation}
or in explicit form
\begin{eqnarray}
 \label{eq:HaanExCoord}
(\mathcal{H}_{\boldsymbol{L}})^i_{jk}&=&  \sum_{\alpha=1}^n
\biggl(-2  (\boldsymbol{L}^3)^i_\alpha \partial_{\mbox{[}j} \boldsymbol{L}^\alpha_{k\mbox{]}}+ (\boldsymbol{L}^2)^i_\alpha\Bigl( \partial_{\mbox{[}j} (\boldsymbol{L}^2) ^\alpha_{k\mbox{]}}+4  \sum_{\beta=1}^n\boldsymbol{L}^\beta_{\mbox{[}j} \partial_{\lvert \beta\rvert } \boldsymbol{L}^\alpha_{k\mbox{]}}\Bigr) \\
\nn &-&2 \boldsymbol{L}^i_{\alpha} \Bigr(\boldsymbol{L}^\beta_{\mbox{[}j} \partial_{\lvert \beta\rvert } (\boldsymbol{L}^2)^\alpha_{k\mbox{]}} +  (\boldsymbol{L}^2)^\beta_{\mbox{[}j}\partial_{\lvert \beta \rvert } (\boldsymbol{L}) ^{\alpha}_{k\mbox{]}}\Bigr)
+(\boldsymbol{L}^2)^\alpha_{\mbox{[}j}\partial_{\lvert \alpha \rvert } (\boldsymbol{L}^2) ^i_{k\mbox{]}}
\biggr) \ .
\end{eqnarray}
Here for the sake of brevity we have used the notation $\partial_j := \frac{\partial}{\partial x^j}$; the indices between square brackets are to be skew-symmetrized, except those in $\lvert \cdot\rvert $.

In order to prove some of the results of the following sections,  we   state preliminarily three  lemmas.

\begin{lemma}\label{lm:Fond}
Let $\boldsymbol{L}:TM \to TM$ be an   operator field.
 The following identities hold:
\begin{eqnarray}
\label{eq:NijAkerA}
{\tau}_{\boldsymbol{L} }(X ,Y)&=&\boldsymbol{L}^2 [X,Y]\ , \qquad
\qquad\quad \forall X, Y \in \ker(\boldsymbol{L})
\\
\label{eq:HaanAkerA}
\mathcal{H}_{\boldsymbol{L} }(X ,Y)&=&\boldsymbol{L}^4 [X,Y]\ , \qquad
\qquad\quad \forall X, Y \in \ker(\boldsymbol{L})
\end{eqnarray}
\end{lemma}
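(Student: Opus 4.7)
My plan is to exploit directly the defining formulas \eqref{eq:Ntorsion} and \eqref{eq:Haan}, relying on the tensoriality of $\mathcal{T}_{\boldsymbol{L}}$ and $\mathcal{H}_{\boldsymbol{L}}$ as $(1,2)$ tensor fields rather than on their differential structure.

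For the first identity, I would substitute $X,Y\in\ker(\boldsymbol{L})$, i.e.\ $\boldsymbol{L}X=0$ and $\boldsymbol{L}Y=0$, into \eqref{eq:Ntorsion}. The three brackets $[\boldsymbol{L}X,\boldsymbol{L}Y]$, $[X,\boldsymbol{L}Y]$, $[\boldsymbol{L}X,Y]$ vanish identically (either as Lie brackets of vector fields that are zero everywhere on an open set, or pointwise by $C^\infty$-linearity of the torsion tensor, which reduces the evaluation at a point $p$ to plugging the zero vector into one of its arguments). Only the term $\boldsymbol{L}^2[X,Y]$ survives, yielding \eqref{eq:NijAkerA}.

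For the second identity, I would start from \eqref{eq:Haan}. By tensoriality of $\mathcal{T}_{\boldsymbol{L}}$, the conditions $\boldsymbol{L}X=0$, $\boldsymbol{L}Y=0$ force
\[
\mathcal{T}_{\boldsymbol{L}}(\boldsymbol{L}X,\boldsymbol{L}Y)=0,\qquad \mathcal{T}_{\boldsymbol{L}}(X,\boldsymbol{L}Y)=0,\qquad \mathcal{T}_{\boldsymbol{L}}(\boldsymbol{L}X,Y)=0,
\]
so the last three terms in the Haantjes formula disappear, and only $\boldsymbol{L}^2\,\mathcal{T}_{\boldsymbol{L}}(X,Y)$ remains. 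Substituting the identity \eqref{eq:NijAkerA} just proved then gives $\mathcal{H}_{\boldsymbol{L}}(X,Y)=\boldsymbol{L}^2\bigl(\boldsymbol{L}^2[X,Y]\bigr)=\boldsymbol{L}^4[X,Y]$, which is \eqref{eq:HaanAkerA}.

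The only conceptual point that could cause trouble is the appearance of Lie brackets inside the definitions: a priori, $[X,\boldsymbol{L}Y]$ is not $C^\infty$-linear in $\boldsymbol{L}Y$. The resolution is the classical fact that the full combination defining $\mathcal{T}_{\boldsymbol{L}}$ is tensorial, so one may legitimately evaluate each torsion term pointwise and conclude it vanishes whenever one of its arguments is zero at the point; no genuine calculation beyond this observation is required.
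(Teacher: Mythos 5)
Your proof is correct and essentially matches the paper's: both arguments are direct substitutions of $\boldsymbol{L}X=\boldsymbol{L}Y=0$ into the defining formulas, after which only the leading term survives. The only (immaterial) difference is that you route the Haantjes computation through the recursive definition \eqref{eq:Haan} together with the just-proved identity \eqref{eq:NijAkerA}, whereas the paper substitutes into the expanded form \eqref{eq:HaanEx}; your remark on the tensoriality of $\mathcal{T}_{\boldsymbol{L}}$ is a valid (and in fact not even necessary, since $\boldsymbol{L}Y$ is the zero vector field) justification for discarding the remaining terms.
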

\begin{proof}
These relations come directly from Eqs. \eqref{eq:Ntorsion} and \eqref{eq:HaanEx}, taking into account that only the first term is non-vanishing in their right hand sides, due to the fact that
$X,Y \in \ker(\boldsymbol{L} )$.
\end{proof}

In the following result, we apply the classical Fitting lemma~\cite{Jacob} to the module, over the ring $\mathcal{C}^{\infty}(M)$, of the operator fields on $TM$.

\begin{lemma}\label{lm:Fitting}
Let   $\boldsymbol{L}:TM \to TM$ be an   operator field. For any $\boldsymbol{x}\in M$ there exists a minimum positive integer $\rho(\boldsymbol{x})$ such that
\begin{equation}
 \label{eq:Riesz}
\ker \boldsymbol{L}^{\rho(\boldsymbol{x}) }(\boldsymbol{x}) = \ker \boldsymbol{L}^{\rho(\boldsymbol{x}) +1} (\boldsymbol{x}) \ ,\quad Im \boldsymbol{L}^{\rho(\boldsymbol{x}) }(\boldsymbol{x}) = Im \boldsymbol{L}^{\rho(\boldsymbol{x}) +1} (\boldsymbol{x})
\end{equation}
and
\begin{equation}
T_{\boldsymbol{x}}M=\ker\boldsymbol{L}^{\rho(\boldsymbol{x}) }(\boldsymbol{x}) \oplus Im\boldsymbol{L}^{\rho(\boldsymbol{x}) }(\boldsymbol{x})
\end{equation}
\end{lemma}
The index $\rho(\boldsymbol{x}) $ is  said to be the Riesz index of $\boldsymbol{L}$ at $\boldsymbol{x}$.

\begin{lemma}\label{Lemma:L5}
Let $\boldsymbol{A}$  and $\boldsymbol{B}$ be two   commuting operator fields  with  Riesz index $\rho(\boldsymbol{x})=1 $ at any point $\boldsymbol{x}$ of $M$. Then, the identity
\begin{equation}
\label{eq:rankAB}
\ker\boldsymbol{A} + \ker\boldsymbol{B} =
\ker\boldsymbol{A} \boldsymbol{B}
\end{equation}
holds.
\end{lemma}
\begin{proof}
The condition $\rho(\boldsymbol{x})=1 $ is equivalent to the conditions
$\ker\boldsymbol{A}=\ker\boldsymbol{A}^2$ and $\ker\boldsymbol{B}=\ker\boldsymbol{B}^2$. Besides, since the two operator fields $\boldsymbol{A}$ and $\boldsymbol{B}$ commute with each others, it is immediate to  check that each vector field $Z:= X+Y$ with $X\in \ker\boldsymbol{A} $ and $Y\in\ker\boldsymbol{B} $ belongs to $ \ker\boldsymbol{AB} $.

Conversely, let us suppose that $\ker\boldsymbol{A}=\ker\boldsymbol{A}^2$. Then, by means of Lemma~\ref{lm:Fitting},
 each $Z\in TM$ can be decomposed  as $Z:= X+ \boldsymbol{A}W$ with $X\in \ker \boldsymbol{A}$ and some $W\in TM$. Let us prove that for each $Z\in \ker\boldsymbol{A} \boldsymbol{B}$, the vector field $\boldsymbol{A}W \in \ker \boldsymbol{B}$. In fact,  taking into account that $\boldsymbol{A}$ and $\boldsymbol{B}$ commute, the condition $ \boldsymbol{A} \boldsymbol{B}Z= \mathbf{0}$ implies that $  \boldsymbol{B} \boldsymbol{A}^2 W=\mathbf{0}$, i.e.,~$\boldsymbol{B}  W\in \ker \boldsymbol{A}^2=\ker \boldsymbol{A}$. Therefore, $\boldsymbol{A}W \in \ker \boldsymbol{B}$.
\end{proof}
\begin{corollary}
Under the same hypotheses of Lemma~\ref{Lemma:L5}, we have
\begin{equation}
 \label{eq:cor6}
\ker\boldsymbol{A}\boldsymbol{B} =
\ker\boldsymbol{A}^2 \boldsymbol{B}^2 \ .
\end{equation}
\end{corollary}

\subsection{A new family of generalized Nijenhuis torsions}

We shall now generalize the notion of Haantjes torsion by means of a recursive procedure. Indeed, one can introduce a ``\textit{tower}'' of generalized torsions of Nijenhuis type.
\begin{definition}
We define the univariate generalized Nijenhuis torsion of level $n$ as the vector-valued 2-form given by
\begin{equation}
{\tau}^{(n)}_{\boldsymbol{L}}(X,Y):= \boldsymbol{L}^2{\tau}^{(n-1)}_{\boldsymbol{L}}(X,Y)+{\tau}^{(n-1)}_{\boldsymbol{L}}(\boldsymbol{L}X,\boldsymbol{L}Y)-
\boldsymbol{L}\Bigl({\tau}^{(n-1)}_{\boldsymbol{L}}(X,\boldsymbol{L}Y)+{\tau}^{(n-1)}_{\boldsymbol{L}}(\boldsymbol{L}X,Y)\Bigr), n\geq 1
\end{equation}
where $\tau_{\boldsymbol{L}}^{(0)}(X,Y)= [X,Y]$, $X,Y \in TM$. Here $\tau_{\boldsymbol{L}}^{(1)}=\tau_{\boldsymbol{L}}$ and $\tau_{\boldsymbol{L}}^{(2)}=\mathcal{H}_{\boldsymbol{L}}$.
\end{definition}
The expression of the $n$th-level torsion in local coordinates is given by
\begin{equation}
\label{eq:GenHaanLocal}
({\tau}^{(n)}_{\boldsymbol{L}})^i_{jk}=  \sum_{\alpha,\beta=1}^n\biggl(
\boldsymbol{L}^i_\alpha \boldsymbol{L}^\alpha_\beta({\tau}^{(n-1)}_{\boldsymbol{L}})^\beta_{jk}  +
({\tau}^{(n-1)}_{\boldsymbol{L}})^i_{\alpha \beta}\boldsymbol{L}^\alpha_j \boldsymbol{L}^\beta_k-
\boldsymbol{L}^i_\alpha\Bigl( ({\tau}^{(n-1)}_{\boldsymbol{L}})^\alpha_{\beta k} \boldsymbol{L}^\beta_j+
 ({\tau}^{(n-1)}_{\boldsymbol{L}})^\alpha_{j \beta } \boldsymbol{L}^\beta_k \Bigr)
 \biggr) \ .
\end{equation}
We wish to recall that a notion of generalized Nijenhuis torsions was also proposed, in a independent way, in~\cite{YKS}.
In that algebraic construction, one considers generalized torsions of order $n$ associated more generally to an arbitrary vector-valued skew symmetric bilinear map on a real vector space.
In our geometric approach, which is a recursive one, we work on a tangent bundle and fix the initial condition of the recurrence
with the standard choice of the Lie bracket between vector fields.

\subsection{General properties of Haantjes operators}

We shall first consider some specific cases, in which the construction of the Nijenhuis and Haantjes torsions is particularly simple.

\begin{example}\label{ex:Nd}
Let $\boldsymbol{L}: TM \to TM$  be an operator field that takes the diagonal form
\begin{equation}
\boldsymbol{L}(\boldsymbol{x})=\sum _{i=1}^n l_{i }(\boldsymbol{x}) \frac{\partial}{\partial x^i}\otimes \rd x^i \label{eq:Ldiagonal}
\end{equation}
in some local chart $\boldsymbol{x}=(x^1,\ldots,x^n)$.
The components of its Nijenhuis torsion read
\begin{equation}
 \label{eq:Nd}
(\mathcal{T}_{\boldsymbol{L}})^i_{jk}=(l_j-l_k)
\left( \frac{\partial l_j}{\partial x^k}\delta^i_j+\frac{\partial l_k}{\partial x^j}\delta^i_k\right).
\end{equation}
It is evident  that  $(\mathcal{T}_{\boldsymbol{L}})^i_{jk}=0$ if
 $i$, $j$ and $k$ are all distinct or if $j=k$. Thus, we can limit ourselves to analyze the $n(n-1)$ components
\begin{equation}
(\mathcal{T}_{\boldsymbol{L}})^j_{jk}=(l_j-l_k)\frac{\partial l_j}{\partial x^k} \ ,
\qquad\qquad j\neq k \ .
\end{equation}
 If $\frac{\partial l_j}{\partial x^k}\neq 0$, each component vanishes if and only if $l_j(\boldsymbol{x})\equiv l_k(\boldsymbol{x})$.
Therefore, we can state the following
\end{example}
\begin{lemma}\label{lem:LN}
Let $\boldsymbol{L}$ be the diagonal  operator field \eqref{eq:Ldiagonal}, and assume that  its  Nijenhuis torsion  vanishes. Let us denote by
$(i_1,\ldots,i_j,\ldots, i_r)$, $r\leq n$ an ordered subset of $(1,2,\ldots,n)$. If the $i_j$th eigenvalue of  $\boldsymbol{L}$ depends on the variables  $(i_1,\ldots,i_j,\ldots, i_r)$, then
\begin{equation}
l_{i_j}(x^{i_1},\ldots,x^{i_j},\ldots, x^{i_r}) = l_{i_1}= l_{i_2}=\ldots  = l_{i_r} \label{eq:eigenvalues} \ .
\end{equation}
\end{lemma}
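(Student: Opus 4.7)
The plan is direct: invoke the local formula for the Nijenhuis torsion of a diagonal operator already worked out in Example \ref{ex:Nd}. As shown there, all components $(\mathcal{T}_{\boldsymbol{L}})^i_{jk}$ with $i,j,k$ pairwise distinct, and those with $j=k$, vanish identically; the only potentially non-trivial contributions are
\begin{equation*}
(\mathcal{T}_{\boldsymbol{L}})^j_{jk}=(l_j-l_k)\,\frac{\partial l_j}{\partial x^k}, \qquad j\neq k.
\end{equation*}
Thus the hypothesis $\mathcal{T}_{\boldsymbol{L}}=0$ collapses to a product condition indexed by the ordered pairs $(j,k)$ with $j\neq k$.

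First I would reformulate this condition as a dichotomy: for every such pair, either $\partial l_j/\partial x^k\equiv 0$ (so $l_j$ does not depend on $x^k$), or $l_j(\boldsymbol{x})\equiv l_k(\boldsymbol{x})$ on the chart. This is just the statement that the vanishing of a product of two smooth functions, one of which is a pure difference of eigenvalues, forces one factor to be identically zero wherever the other is non-zero, and smoothness then extends the identity by continuity to the entire coordinate domain.

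Then I would apply this dichotomy to the eigenvalue $l_j$ and, in turn, to each index $i_s$ in the set of variables on which $l_j$ genuinely depends. For $i_s=j$ there is nothing to prove; for $i_s\neq j$ the assumption $\partial l_j/\partial x^{i_s}\not\equiv 0$ excludes the first alternative of the dichotomy, so the second must hold, giving $l_j\equiv l_{i_s}$. Running $s$ from $1$ to $r$ and chaining these identities by transitivity yields $l_{i_1}\equiv l_{i_2}\equiv\cdots\equiv l_{i_r}\equiv l_j$, which is precisely \eqref{eq:eigenvalues}.

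There is no real obstacle here: the whole argument is a careful reading of Example \ref{ex:Nd}. The only minor delicacy is the passage from ``pointwise non-vanishing of $\partial l_j/\partial x^{i_s}$ on an open set'' to the global equality $l_j\equiv l_{i_s}$ on the chart, which is handled by the continuity of the $l_i$'s as noted above.
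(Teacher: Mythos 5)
Your proof is correct and follows exactly the paper's own (implicit) argument: the lemma is presented as an immediate consequence of the computation in Example \ref{ex:Nd}, namely that the only surviving torsion components are $(\mathcal{T}_{\boldsymbol{L}})^j_{jk}=(l_j-l_k)\,\partial l_j/\partial x^k$, whose vanishing forces $l_j\equiv l_k$ wherever $l_j$ genuinely depends on $x^k$, and then one chains these identities over the indices $i_1,\ldots,i_r$. The small delicacy you flag --- passing from non-vanishing of $\partial l_j/\partial x^{i_s}$ on an open set to the identity $l_j\equiv l_{i_s}$ on the whole chart --- is glossed over in exactly the same way in the paper's own phrasing, so your treatment is at the same level of rigor as the original.
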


In addition to the trivial case when each eigenvalue is constant, we can distinguish several further cases, in which the Nijenhuis torsion of a diagonal operator vanishes. For instance,
\begin{itemize}
\item[(i)]
  $l_j(\boldsymbol{x})=\lambda_j (x^j)\qquad\qquad\qquad  j=1,\ldots,n \Longrightarrow \textit{n simple eigenvalues}$
\item[(ii)]
$l_j(\boldsymbol{x})=\lambda(\boldsymbol{x}) \qquad\qquad\qquad\quad  j=1,\ldots,n \Longrightarrow  \textit{1 eigenvalue of  multiplicity n}$ \ .

\end{itemize}
An exhaustive analysis of all the remaining, intermediate possibilities is left to the reader.

\begin{example}\label{ex:H2}
Let $dim~M=2$. It is easy to prove by a straightforward computation that the Haantjes torsion of any field of smooth operators vanishes.

\end{example}

\begin{example}\label{ex:Hd}
Let $\boldsymbol{L}$ be the diagonal operator of Eq.~\eqref{eq:Ldiagonal}. The components of its Haantjes torsion read
\begin{equation}
(\mathcal{H}_{\boldsymbol{L}})^i_{jk}=(l_i-l_j )(l_i-l_k)(\mathcal{T}_{\boldsymbol{L}})^i_{jk} = 0, \label{Haantdiag}
\end{equation}
where $(\mathcal{T}_{\boldsymbol{L}})^i_{jk} $ is given by Eq.~\eqref{eq:Nd}.
\end{example}

\begin{proposition}\label{pr:Hd}
Let $\boldsymbol{L}: TM\to TM$ be an operator field. If there exists a local coordinate chart $(x^1,\ldots, x^n)$ where $\boldsymbol{L}$ takes the diagonal form \eqref{eq:Ldiagonal}, then
the Haantjes torsion of $\boldsymbol{L}$ identically vanishes.
\end{proposition}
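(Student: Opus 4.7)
The proposition is essentially a corollary of the calculations already collected in Examples \ref{ex:Nd} and \ref{ex:Hd}; my plan is simply to combine them and perform a short index inspection.

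First I would observe that the Haantjes torsion is an intrinsic $(1,2)$ tensor field on $M$, so its vanishing in one chart implies its vanishing as a tensor, independently of coordinates. This means it suffices to work entirely in the distinguished chart $(x^1,\ldots,x^n)$ in which $\boldsymbol{L}$ has the diagonal representation \eqref{eq:Ldiagonal}. In that chart, Example \ref{ex:Hd} has already reduced the problem to the identity
\begin{equation*}
(\mathcal{H}_{\boldsymbol{L}})^i_{jk}=(l_i-l_j)(l_i-l_k)(\mathcal{T}_{\boldsymbol{L}})^i_{jk},
\end{equation*}
with $(\mathcal{T}_{\boldsymbol{L}})^i_{jk}$ given by \eqref{eq:Nd}.

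Next I would carry out a three-case analysis on the multi-index $(i,j,k)$. From the explicit expression \eqref{eq:Nd} we read off that $(\mathcal{T}_{\boldsymbol{L}})^i_{jk}$ is supported only on those entries for which $i=j$ or $i=k$; in particular it vanishes whenever $i\notin\{j,k\}$. In the remaining two cases the prefactor kills the product: if $i=j$ then $(l_i-l_j)=0$, while if $i=k$ then $(l_i-l_k)=0$. Thus $(\mathcal{H}_{\boldsymbol{L}})^i_{jk}=0$ for every choice of $(i,j,k)$, which is what we needed.

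There is no real obstacle in this proof, since the two nontrivial inputs, namely formulas \eqref{eq:Nd} and \eqref{Haantdiag}, have already been established. The only slightly delicate point one must be careful about is the skew-symmetry in $(j,k)$: the product $(l_i-l_j)(l_i-l_k)$ is symmetric in $(j,k)$, but since $(\mathcal{T}_{\boldsymbol{L}})^i_{jk}$ is itself skew in $(j,k)$, the overall object is skew and the componentwise case analysis still covers every entry. The heavy lifting has been pushed into Example \ref{ex:Hd}, whose derivation amounts to substituting the diagonal form \eqref{eq:Ldiagonal} into the defining expression \eqref{eq:Haan} of the Haantjes torsion and grinding through the commutators, a routine computation.
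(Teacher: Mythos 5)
Your argument is exactly the one the paper intends: the statement is presented there as an immediate consequence of formulas \eqref{eq:Nd} and \eqref{Haantdiag}, and your case analysis on $(i,j,k)$ simply makes explicit why the product $(l_i-l_j)(l_i-l_k)(\mathcal{T}_{\boldsymbol{L}})^i_{jk}$ vanishes entrywise. The proof is correct and follows the same route as the paper.
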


Due to the relevance of the  Haantjes (Nijenhuis) vanishing condition,
we propose the following

\begin{definition}
A Haantjes (Nijenhuis)   operator is an operator field whose  Haantjes (Nijenhuis) torsion identically vanishes.
\end{definition}

We also recall that the transposed operator  $\boldsymbol{L}^{T}: T^*M \mapsto T^*M$ is  defined to be the transposed linear map of $\boldsymbol{L}$ with respect to the natural pairing between the tangent and the cotangent bundle of $M$:
\[
\langle \boldsymbol{L}^T \alpha , X \rangle =\langle \alpha ,\boldsymbol{L}X \rangle  \qquad\qquad\alpha \in T^*M, \ X\in TM.
\]
 The condition for a Nijenhuis operator $\boldsymbol{N}$ to be  torsionless can be expressed in terms of its Lie derivative along the flow of any vector field $X\in TM$  in the following, equivalent  way  (see, for instance,~\cite{GVY}):
\begin{equation}
\label{eq:TNLie}
  \mathcal{L}_ {\boldsymbol{N}X}( \boldsymbol{N})=\boldsymbol{N}\mathcal{L}_ {X}( \boldsymbol{N}) \ .
\end{equation}

Analogously, the vanishing of the Haantjes torsion \eqref{eq:Haan} of an operator field $\boldsymbol{L}$ is equivalent to the novel condition
\begin{equation}
 \mathcal{L}_ {\boldsymbol{L}^2X}\bigl( \boldsymbol{L}\bigr)\boldsymbol{L}=\boldsymbol{L}^3\mathcal{L}_ {X}\bigl( \boldsymbol{L}\bigr) -
\boldsymbol{L}^2
\Bigl(2 \mathcal{L}_ {\boldsymbol{L}X} (\boldsymbol{L})+\mathcal{L}_ {X} (\boldsymbol{L})\boldsymbol{L}\Bigr)+
\boldsymbol{L}\Bigl(  \mathcal{L}_ {\boldsymbol{L}^2X}( \boldsymbol{L})+2\mathcal{L}_ {\boldsymbol{L}X} (\boldsymbol{L})\boldsymbol{L}\Bigr).
\end{equation}

As is well known (see for instance~\cite{GVY}),  given  an invertible Nijenhuis operator, its inverse is also a Nijenhuis operator. The same property holds true for a Haantjes operator.
\begin{proposition}\label{pr:Linv}
Let $\boldsymbol{L}: TM \to TM$ be a Haantjes operator. If $\boldsymbol{L}^{-1}$ exists, it is also a Haantjes operator.
\end{proposition}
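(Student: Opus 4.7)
The plan is to establish an intertwining identity that directly relates the Haantjes torsion of $\boldsymbol{L}^{-1}$ to that of $\boldsymbol{L}$, so that the vanishing of one forces the vanishing of the other. The approach proceeds in two stages that exploit the recursive structure of the Haantjes torsion: it is the same formal expression as the Nijenhuis torsion with the Lie bracket replaced by $\mathcal{T}_{\boldsymbol{L}}$.

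First, I would verify the analogous, simpler fact at the Nijenhuis level, namely the identity
\begin{equation}\label{eq:NinvInt}
\mathcal{T}_{\boldsymbol{L}^{-1}}(X,Y) = \boldsymbol{L}^{-2}\,\mathcal{T}_{\boldsymbol{L}}\bigl(\boldsymbol{L}^{-1}X,\,\boldsymbol{L}^{-1}Y\bigr).
\end{equation}
This is obtained by substituting $X\mapsto \boldsymbol{L}^{-1}X$, $Y\mapsto \boldsymbol{L}^{-1}Y$ in the defining formula \eqref{eq:Ntorsion} for $\mathcal{T}_{\boldsymbol{L}}$, then multiplying by $\boldsymbol{L}^{-2}$ and comparing term by term with $\mathcal{T}_{\boldsymbol{L}^{-1}}(X,Y)$; each of the four terms matches. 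As a byproduct, this recovers the classical statement that invertible Nijenhuis operators have Nijenhuis inverses.

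Next, I would substitute \eqref{eq:NinvInt} into each of the four occurrences of $\mathcal{T}_{\boldsymbol{L}^{-1}}$ appearing in the defining formula \eqref{eq:Haan} for $\mathcal{H}_{\boldsymbol{L}^{-1}}(X,Y)$. After factoring out $\boldsymbol{L}^{-4}$ and regrouping, I expect to obtain the clean intertwining
\begin{equation*}
\mathcal{H}_{\boldsymbol{L}^{-1}}(X,Y) = \boldsymbol{L}^{-4}\,\mathcal{H}_{\boldsymbol{L}}\bigl(\boldsymbol{L}^{-2}X,\,\boldsymbol{L}^{-2}Y\bigr).
\end{equation*}
Under the hypothesis that $\boldsymbol{L}$ is Haantjes, the right-hand side vanishes identically for all $X,Y\in TM$, and therefore so does $\mathcal{H}_{\boldsymbol{L}^{-1}}$, proving the proposition.

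The main obstacle is purely bookkeeping: one must track the exponents of $\boldsymbol{L}$ and $\boldsymbol{L}^{-1}$ in each of the four terms produced after substitution, and verify that the rearrangement exactly reproduces the four summands of $\mathcal{H}_{\boldsymbol{L}}$ evaluated at $(\boldsymbol{L}^{-2}X,\boldsymbol{L}^{-2}Y)$. There is no deeper technical difficulty, because the Haantjes torsion has the same formal shape as the Nijenhuis torsion with respect to its bilinear input, so the substitution trick of \eqref{eq:NinvInt} compounds naturally at the second level. An alternative route via the Lie derivative characterization of $\mathcal{H}_{\boldsymbol{L}}$ stated after \eqref{eq:TNLie} is conceivable, but less direct than this calculation.
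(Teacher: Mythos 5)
Your proposal is correct and rests on exactly the same key identity as the paper, namely $\mathcal{H}_{\boldsymbol{L}^{-1}}(X,Y)=\boldsymbol{L}^{-4}\,\mathcal{H}_{\boldsymbol{L}}(\boldsymbol{L}^{-2}X,\boldsymbol{L}^{-2}Y)$, which the paper states and says can be recovered from the expanded formula \eqref{eq:HaanEx}. Your two-stage derivation through the Nijenhuis-level identity and the recursive definition \eqref{eq:Haan} is simply a tidy way of carrying out the verification the paper leaves to the reader, and the bookkeeping does check out.
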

\begin{proof}
It is a consequence of the identity
\begin{equation}
\mathcal{H}_{\boldsymbol{L}^{-1}}(X,Y)=
\boldsymbol{L}^{-4}\ \mathcal{H}_{\boldsymbol{L}}(\boldsymbol{L}^{-2}X,\boldsymbol{L}^{-2}Y)
\end{equation}
that can be easily deduced from Eq.~\eqref{eq:HaanEx}. For an alternative proof, see Proposition 2, p. 257 of~\cite{BogCMP}.
\end{proof}

The product of a Nijenhuis operator with a generic function is no longer a Nijenhuis operator, as
is proved by  the identity
\begin{equation}
 \label{eq:fNtorsion}
\mathcal{T}_{f \boldsymbol{L}}(X,Y)=f^2\mathcal{T}_{\boldsymbol{L}}(X,Y)+f\Bigl((\boldsymbol{L}X)(f) \boldsymbol{L}Y-(\boldsymbol{L}Y)(f) \boldsymbol{L}X+Y(f)\boldsymbol{L}^2 X-X(f)\boldsymbol{L}^2 Y\Bigr) \ ,
\end{equation}
where $X(f)$ denotes the Lie derivative of an arbitrary function $f \in C^\infty(M)$  with respect to the vector field $X$.
However, the differential and algebraic properties of the Haantjes operators are much richer, as one can infer from the following, remarkable results.

\begin{proposition}[\cite{BogCMP}]\label{pr:fL}
Let  $\boldsymbol{L}: TM\to TM$ be an  operator field, $f,g:M \rightarrow \mathbb{R}$ two $C^\infty(M)$ functions, and $\boldsymbol{I}$  the identity operator in $TM$. Then we have
\begin{equation}
 \label{eq:LtorsionLocal}
\mathcal{H}_{f \boldsymbol{I}+g \boldsymbol{L}}(X,Y)=g^4\, \mathcal{H}_{ \boldsymbol{L}}(X,Y) \ .
\end{equation}
\end{proposition}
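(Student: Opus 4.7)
The plan is a direct expansion of the definition, reducing things to two preliminary observations. First, $\mathcal{T}_{f\boldsymbol{I}}\equiv 0$: indeed, expanding each bracket via the Leibniz rule $[hX,kY]=hk[X,Y]+h\,X(k)\,Y-k\,Y(h)\,X$ and substituting into \eqref{eq:Ntorsion}, the four terms telescope and every derivative of $f$ cancels. Consequently $\mathcal{H}_{f\boldsymbol{I}}\equiv 0$ as well. Second, formula \eqref{eq:fNtorsion} already records how $\mathcal{T}_{g\boldsymbol{L}}$ deviates from the ``naive'' $g^{2}\mathcal{T}_{\boldsymbol{L}}$ by correction terms of the form ``(derivative of $g$) times ($\boldsymbol{L}X$ or $\boldsymbol{L}^{2}X$)''.

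Next I would set $\boldsymbol{M}:=f\boldsymbol{I}+g\boldsymbol{L}$ and expand $\mathcal{T}_{\boldsymbol{M}}(X,Y)$ by bilinearity of the brackets. The outcome is a decomposition of the form
\[
\mathcal{T}_{\boldsymbol{M}}(X,Y)=g^{2}\,\mathcal{T}_{\boldsymbol{L}}(X,Y)+\mathcal{R}(X,Y),
\]
where $\mathcal{R}$ is a skew-symmetric bilinear map whose values lie in the image of the polynomial $\alpha\boldsymbol{I}+\beta\boldsymbol{L}+\gamma\boldsymbol{L}^{2}$, with scalar coefficients $\alpha,\beta,\gamma$ built from derivatives of $f,g$ evaluated on $X,Y$. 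Substituting this decomposition into the definition \eqref{eq:Haan},
\[
\mathcal{H}_{\boldsymbol{M}}(X,Y)=\boldsymbol{M}^{2}\mathcal{T}_{\boldsymbol{M}}(X,Y)+\mathcal{T}_{\boldsymbol{M}}(\boldsymbol{M}X,\boldsymbol{M}Y)-\boldsymbol{M}\bigl(\mathcal{T}_{\boldsymbol{M}}(X,\boldsymbol{M}Y)+\mathcal{T}_{\boldsymbol{M}}(\boldsymbol{M}X,Y)\bigr),
\]
the principal piece $g^{2}\mathcal{T}_{\boldsymbol{L}}$ is designed to contribute exactly $g^{4}\mathcal{H}_{\boldsymbol{L}}$: each of the four $\boldsymbol{M}$'s effectively picks out its $g\boldsymbol{L}$ component, while the $f\boldsymbol{I}$ part simply reproduces the Haantjes four-term combination for $\boldsymbol{L}$ itself, up to cancellations that use only $\boldsymbol{I}\boldsymbol{L}=\boldsymbol{L}\boldsymbol{I}$.

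The remaining task, which is the main obstacle, is to prove that all $\mathcal{R}$-contributions cancel. This cancellation rests on two features: the Haantjes combination treats its arguments $(X,Y)$ in a balanced, skew-symmetric way, so contributions that turn out to be symmetric in $X\leftrightarrow Y$ must drop out; and the explicit polynomial form of $\mathcal{R}$ in $\boldsymbol{I},\boldsymbol{L},\boldsymbol{L}^{2}$, combined with the identity $[\boldsymbol{I},\boldsymbol{L}]=0$, makes each residual term pair up with its mate under the skew-symmetrization. In practice I would push the calculation through either in local coordinates using \eqref{eq:HaanExCoord}, organizing terms by their $f,g$-derivative content, or at the operator level via the Lie-derivative reformulation following \eqref{eq:TNLie}. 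Either route is lengthy but essentially mechanical; the elegance of the final identity $\mathcal{H}_{f\boldsymbol{I}+g\boldsymbol{L}}=g^{4}\mathcal{H}_{\boldsymbol{L}}$ is a genuine feature of the quartic structure of $\mathcal{H}$, reflecting the fact that $f\boldsymbol{I}+g\boldsymbol{L}$ shares its (generalized) eigendistributions with $\boldsymbol{L}$ and thus cannot carry any additional integrability obstruction.
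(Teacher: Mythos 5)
Your overall strategy is viable and your preliminary observations are correct: $\mathcal{T}_{f\boldsymbol{I}}\equiv 0$, and the principal piece $g^{2}\mathcal{T}_{\boldsymbol{L}}$ of $\mathcal{T}_{\boldsymbol{M}}$ does assemble into $g^{4}\mathcal{H}_{\boldsymbol{L}}$ once one uses the $C^{\infty}(M)$-bilinearity of $\mathcal{T}_{\boldsymbol{L}}$ in its vector-field arguments (a fact your sketch relies on but never states). However, the crux of the proof --- the vanishing of all $\mathcal{R}$-contributions --- is asserted rather than established, and the two mechanisms you invoke do not actually account for it. The residual term is \emph{already} skew-symmetric in $X\leftrightarrow Y$: for instance, for $g\boldsymbol{L}$ alone, formula \eqref{eq:fNtorsion} gives a correction $g\bigl((\boldsymbol{L}X)(g)\,\boldsymbol{L}Y-(\boldsymbol{L}Y)(g)\,\boldsymbol{L}X+Y(g)\,\boldsymbol{L}^{2}X-X(g)\,\boldsymbol{L}^{2}Y\bigr)$, which is manifestly antisymmetric, so the skew-symmetry of the Haantjes combination eliminates nothing and ``pairing up under skew-symmetrization'' is not what happens. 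Since $\mathcal{R}=\mathcal{T}_{\boldsymbol{M}}-g^{2}\mathcal{T}_{\boldsymbol{L}}$ is tensorial, what must actually be shown is that the combination $\boldsymbol{M}^{2}\mathcal{R}(X,Y)+\mathcal{R}(\boldsymbol{M}X,\boldsymbol{M}Y)-\boldsymbol{M}\bigl(\mathcal{R}(X,\boldsymbol{M}Y)+\mathcal{R}(\boldsymbol{M}X,Y)\bigr)$ vanishes identically; this is a genuine term-by-term cancellation (e.g.\ the term $(\boldsymbol{L}X)(g)\,\boldsymbol{L}^{3}Y$ produced by $\boldsymbol{L}^{2}\mathcal{R}(X,Y)$ cancels against its negative inside $\mathcal{R}(\boldsymbol{L}X,\boldsymbol{L}Y)$, and so on) which does work out but must be exhibited, not inferred from symmetry. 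Your closing appeal to shared eigen-distributions cannot replace the computation either: the link between vanishing Haantjes torsion and integrability of eigen-distributions is only an equivalence for semisimple operators, whereas \eqref{eq:LtorsionLocal} is claimed for arbitrary operator fields.

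For what it is worth, the paper itself does not prove this statement: it defers to Proposition 1, p.~255 of \cite{BogCMP}, where the identity is obtained precisely by the kind of direct expansion you outline. So your route is the standard one; it just needs the central cancellation carried out explicitly, which is finite and mechanical once $\mathcal{R}$ is written down in full for $\boldsymbol{M}=f\boldsymbol{I}+g\boldsymbol{L}$ --- including the cross terms coming from the polarization of the Nijenhuis torsion in its operator argument, which contribute derivatives of $f$ as well as of $g$.
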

\begin{proof}
See Proposition 1, p. 255 of~\cite{BogCMP}.
\end{proof}
\begin{proposition}[\cite{BogI}]\label{pr:Lpowers}
Let $\boldsymbol{L}: TM \to TM$ be a Haantjes operator. Then, for any  polynomial in $\boldsymbol{L}$ with coefficients $a_{j}\in C^\infty(M)$, the associated Haantjes torsion vanishes, i.e.~\begin{equation}
\mathcal{H}_{\boldsymbol{L}}(X,Y)= \mathbf{0} \ \Longrightarrow \
\mathcal{H}_{(\sum_j a_{j} (\boldsymbol{x}) \boldsymbol{L}^j)}(X,Y)= \mathbf{0}.
\end{equation}
\end{proposition}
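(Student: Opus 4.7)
The plan is to lift the statement from Propositions~\ref{pr:Linv} and \ref{pr:fL} in two stages. First I would establish that every pure power $\boldsymbol{L}^k$, $k\in \mathbb{Z}$, is a Haantjes operator. For $k=2$ I would look for an algebraic identity of the form
\begin{equation*}
\mathcal{H}_{\boldsymbol{L}^2}(X,Y)=\sum_\alpha c_\alpha\,\mathcal{H}_{\boldsymbol{L}}(\boldsymbol{L}^{p_\alpha}X,\boldsymbol{L}^{q_\alpha}Y),
\end{equation*}
with operator-valued coefficients $c_\alpha$, in close analogy with the identity used in Proposition~\ref{pr:Linv} for $\boldsymbol{L}^{-1}$; such an identity can in principle be read off from the expanded form \eqref{eq:HaanEx} by grouping terms according to the even/odd power of $\boldsymbol{L}$ they carry. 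Once $k=2$ is settled, the extension to arbitrary $k\ge 0$ is obtained by induction through $\boldsymbol{L}^{k+2}=\boldsymbol{L}\cdot\boldsymbol{L}^{k+1}$ and a product-type identity of the same flavour, while negative $k$ is already covered by Proposition~\ref{pr:Linv}.

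The second stage is the introduction of the smooth coefficients $a_j(\boldsymbol{x})$ and the sum over $j$. Proposition~\ref{pr:fL}, applied with $\boldsymbol{L}^k$ in place of $\boldsymbol{L}$, immediately yields that any binomial $a_0\boldsymbol{I}+a_k\boldsymbol{L}^k$ is Haantjes. It does not, however, combine three or more powers. This is precisely where the hard part of the proof lies: the Haantjes torsion is quartic and highly non-linear in its argument, so the sum of two Haantjes operators is in general not Haantjes, and a naive additive or inductive-on-degree argument is blocked.

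The cleanest way around this obstacle, and the one I would follow, is geometric rather than computational. By the classical theory of Haantjes, on the open dense subset $M_0\subseteq M$ where the eigenstructure of $\boldsymbol{L}$ is locally constant, $\mathcal{H}_{\boldsymbol{L}}=0$ is equivalent to the simultaneous Frobenius integrability of all the generalized eigen-distributions of $\boldsymbol{L}$. Any $p(\boldsymbol{L})=\sum_j a_j(\boldsymbol{x})\boldsymbol{L}^j$ preserves these distributions pointwise, since at each $x\in M$ the generalized eigenspaces of $p(\boldsymbol{L})(x)$ are unions of generalized eigenspaces of $\boldsymbol{L}(x)$, with eigenvalues $p(l_i(x))$. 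Therefore the generalized eigen-distributions of $p(\boldsymbol{L})$ are again integrable on $M_0$, and the converse direction of Haantjes' theorem gives $\mathcal{H}_{p(\boldsymbol{L})}=0$ on $M_0$. By continuity of the tensor $\mathcal{H}_{p(\boldsymbol{L})}$ the identity propagates to the whole of $M$.

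The main obstacle is, as indicated, the combinatorial step in Stage~2: a direct tensorial induction on the polynomial degree using the Lie-derivative reformulation of the Haantjes condition and the derivation property of $\mathcal{L}_X$ on tensor powers of $\boldsymbol{L}$ is possible in principle, but rapidly becomes unwieldy because of the cross terms mixing different $\boldsymbol{L}^j$ and the derivatives of the $a_j$'s; the spectral argument above sidesteps this complexity entirely.
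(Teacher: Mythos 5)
Your Stage~2 argument has a genuine gap: it rests on the \emph{converse} direction of the Haantjes theorem (integrability of the eigen-distributions $\Rightarrow$ vanishing of the Haantjes torsion), and that converse holds only for \emph{semisimple} operators with real eigenvalues. The paper is explicit on this point (see the discussion preceding Theorem~\ref{th:Haan} and its statement): for generalized eigen-distributions the vanishing of $\mathcal{H}_{\boldsymbol{L}}$ is a sufficient but \emph{not} necessary condition for their mutual integrability. Proposition~\ref{pr:Lpowers} is stated for an arbitrary Haantjes operator $\boldsymbol{L}$, so $p(\boldsymbol{L})$ may be non-semisimple (or may have non-real eigenvalues, in which case the real spectral decomposition you invoke is not even available --- e.g.\ any almost complex structure on a surface is Haantjes by Example~\ref{ex:H2}). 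In those cases showing that the generalized eigen-distributions of $p(\boldsymbol{L})$ are integrable does not let you conclude $\mathcal{H}_{p(\boldsymbol{L})}=0$, so the geometric route proves strictly less than the proposition claims. There is also a circularity problem internal to this paper: the proof of Theorem~\ref{th:Haan} in the non-semisimple case \emph{uses} Proposition~\ref{pr:Lpowers} to pass from $\mathcal{H}_{\boldsymbol{L}}=0$ to the involution conditions of Theorems~\ref{th:HTT} and~\ref{pr:FXmn}, so you cannot turn around and derive Proposition~\ref{pr:Lpowers} from the Haantjes theorem.

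For the record, the paper does not prove this statement at all: it cites Corollary~3.3 of Bogoyavlenskij \cite{BogI}, whose argument is purely algebraic, establishing identities that express $\mathcal{H}_{p(\boldsymbol{L})}$ in terms of $\mathcal{H}_{\boldsymbol{L}}$ (in the spirit of your Stage~1 and of the identity behind Proposition~\ref{pr:Linv}, but carried out for sums of powers with function coefficients, which is exactly the ``hard combinatorial step'' you flag and then sidestep). Your correct instinct that the quartic, non-additive nature of the Haantjes torsion is the obstacle is precisely why the algebraic identities of \cite{BogI} are needed; the spectral shortcut does not replace them outside the semisimple, real-spectrum setting.
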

\begin{proof}
See Corollary 3.3, p. 1136 of~\cite{BogI}.
\end{proof}
As we shall see in Section 5.2, as a consequence of  Propositions \ref{pr:fL} and \ref{pr:Lpowers}, it follows that a Haantjes operator generates a cyclic Haantjes algebra (i. e. a cyclic algebra of Haantjes operators) over the ring of smooth functions on $M$.    However, this is not the case for a Nijenhuis operator  $\boldsymbol{N}$, since a polynomial in $\boldsymbol{N}$ with coefficients $a_j \in C^\infty(M)$ is not necessarily a Nijenhuis operator.

\subsection{An application to Classical Mechanics: the inertia tensor}
 We wish to present a new  example of application of Nijenhuis and Haantjes operators, borrowed from Classical Mechanics.

\begin{example}\label{ex:Hin}
Let $ \mathcal{M}=\{(P_\gamma, m_\gamma)\in (\mathcal{E}_n,\mathbb{R})\} $ be a finite system of mass points (possibly with $m_\gamma <0$) in
 the $n$-dimensional affine Euclidean space $\mathcal{E}_n$. Let us consider the $(1,1)$  tensor field defined by
\begin{equation}
\boldsymbol{E}_P(\vec{v})=\sum _\gamma m_\gamma\bigl( (P_\gamma-P) \cdot\vec{v}\bigr) \ (P_\gamma-P)
\qquad \vec{v}\in T_P\mathcal{E}_n\equiv \mathbb{E}_n\ ,
\end{equation}
known as the \textit{planar inertia tensor}  (or Euler tensor in Continuum Mechanics). The  \textit{inertia tensor} is given by
\begin{equation}
 \label{eq:InT}
{\mathbb{I}}_P(\vec{v})=\sum _\gamma m_\gamma\biggl( \lvert P_\gamma-P\rvert ^2\vec{v}-\bigl((P_\gamma-P) \cdot\vec{v}\bigr) \ (P_\gamma-P)\biggr) \ .
\end{equation}
  They are related by the formulae
\begin{equation}
 \label{eq:relEI}
\mathbb{I}_P=Trace( \boldsymbol{E}_P) \boldsymbol{I}_n  -\boldsymbol{E}_P\ ,\qquad
\boldsymbol{E}_P= \frac{Trace(\mathbb{I}_P)}{n-1}\boldsymbol{I}_n  -\mathbb{I}_P \ , \qquad n>1\ ,
\end{equation}
where $\boldsymbol{I}_n$ is the identity operator in
$\mathbb{E}_n$.  Both of them are symmetric w.r.t. the Euclidean scalar product, so that they are diagonalizable at any point of $\mathcal{E}_n$. Furthermore, by virtue of relations \eqref{eq:relEI} they commute; consequently, they can be pointwise  diagonalized, simultaneously.

If $G$ is the center of mass of  $ \mathcal{M}$, defined by
\[
G-P=\frac{1}{m}\sum_\gamma (P_\gamma-P) \qquad   m := \sum_\gamma m_\gamma \qquad \qquad m\in \mathbb{R}\setminus \{0\},
\]
the following Huygens--Steiner transposition formulae hold
\begin{eqnarray}
\label{eq:Etrans}
\boldsymbol{E}_P(\vec{v})&=&\boldsymbol{E}_G(\vec{v})+ m \bigl((P-G) \cdot\vec{v}\bigr) \ (P-G), \\
\label{eq:Itrans}
\mathbb{I}_P(\vec{v})&=&\mathbb{I}_G(\vec{v})+m \lvert P-G\rvert ^2 -m \bigl((P-G)\cdot\vec{v}\bigr) \ (P-G) \ .
\end{eqnarray}
From Eqs. \eqref{eq:Etrans} and \eqref{eq:Itrans} it follows that
 in the Cartesian coordinates $(x^1,\ldots, x^n)$ with origin  in $G$, defined by the common eigen-directions of $E_G$ and $\mathbb{I}_G$, we have
\begin{eqnarray}
 (\boldsymbol{E}_P)^{i}_j&=&\lambda_i(G)\delta^i_{j}+m\, x^{i} x_{j}, \\
 (\mathbb{I}_P)^{i}_j&=&l_i(G)\delta^i_{ j}+m\biggl(\sum_{\alpha=1}^n x^{\alpha} x_{\alpha}-\, x^{i} x_{j} \biggr) \qquad i,j=1,\ldots,n \ .
\end{eqnarray}
Here $x_{\alpha}= \delta_{\alpha \beta} x^{\beta}$, and $\lambda_i(G)$ and $l_j(G)$ denote the eigenvalues of the
tensor fields  $\boldsymbol{E}$ and $\mathbb{I}$ respectively, both evaluated at the point $G$.
In~\cite{Ben92,Ben93} it has been proved  that the Nijenhuis torsion of $\boldsymbol{E}$ vanishes
    \begin{equation}
\label{eq:Etors} 
(\mathcal{T}_{\boldsymbol{E}})^i_{jk}=
m\sum_{\alpha =1}^n\bigg(x^i\cancel{(\delta_{\alpha k} \boldsymbol{E}^\alpha_j-\delta_{\alpha j}\boldsymbol{E}^\alpha_k)}+
\bcancel{(\delta_{j k}-\delta_{kj})}x^\alpha\boldsymbol{E}^i_\alpha \bigg)=0 \ ,
\end{equation}
thus its Haantjes torsion also  vanishes. Furthermore, we observe that
\begin{equation}
\label{eq:Itors}
(\mathcal{T}_{\mathbb{I}})^i_{jk}=
2m\sum_{\alpha =1}^n\biggl(x_\alpha \bigl(\mathbb{I}^\alpha_j\delta^i_{k}-\mathbb{I}^\alpha_k\delta^i_{j}\bigr)+x_k \mathbb{I}^i_j-x_j\mathbb{I}^i_k\biggr) \ ,
\end{equation}
i.e.~ the Nijenhuis torsion of $\mathbb{I}$ is not identically zero,  although  its  Haantjes torsion vanishes  as a consequence of the identity \eqref{eq:LtorsionLocal}, applied to the first equation  \eqref{eq:relEI}. Thus, the planar inertia tensor $\boldsymbol{E}_P(\vec{v})$ is a Nijenhuis operator, whereas the inertia tensor ${\mathbb{I}}_P(\vec{v})$ is a Haantjes operator.
\end{example}
Other relevant examples of Haantjes operators expressed in terms of Killing tensors in a Riemannian manifold can be found in~\cite{TT2016}.

\section{The geometry of Haantjes operators}

\label{sec:Sp}

\subsection{Integrability}

 As  stated in  Proposition \ref{pr:Hd}, the Haantjes torsion $\mathcal{H}_{\boldsymbol{L}}$ of a semisimple operator field $\boldsymbol{L}$ has a relevant geometrical meaning:
its vanishing is a necessary  condition for the \textit{eigen-distributions} of $\boldsymbol{L}$ to be integrable. To clarify  this point,  first we need  to recall that
a reference frame is a set of $n$ vector fields $\{Y_1,\ldots,Y_n\}$ such that, at each point $\boldsymbol{x}\in U\subseteq M$, $U$ open set, they
form a basis of the tangent space $T_{\boldsymbol{x}}U$.  Two frames  $\{X_1,\ldots,X_n\}$ and $\{Y_1,\ldots,Y_n\}$ are said to be equivalent if $n$ nowhere vanishing smooth
functions $f_i$ exist, such that
\[
 X_i= f_i(\boldsymbol{x}) Y_i \ , \qquad\qquad i=1,\ldots,n \ .
\]
 A \textit{natural} frame is the frame associated to a local chart $\{U, (x^1,\ldots,x^n)\}$ which is formed by the vector fields $\left\{\frac{\partial}{\partial x^1}, \ldots, \frac{\partial}{\partial x^n}\right\}$.
\begin{definition}\label{def:Iframe}
  An \emph{integrable} frame  is a reference frame equivalent to a natural frame.
\end{definition}
\begin{proposition}[\cite{BCR02}]\label{pr:BCRframe}
A reference frame $\{Y_1,\ldots,Y_n \}$ on a manifold $M$ is integrable if and only  one the following two, equivalent conditions are satisfied:
\begin{itemize}
\item
each distribution generated by any two vector fields $\{Y_i,Y_j\}$ is Frobenius integrable;
\item
each distribution $\mathcal{E}_i$ generated by all the vector fields except $Y_i$ is Frobenius integrable.
\end{itemize}
\end{proposition}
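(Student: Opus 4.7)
The plan is to prove a chain of implications: integrability of the frame implies the codimension-one condition, which implies the pairwise condition, which in turn implies integrability. The first of these is essentially trivial, the third is the substantive step.

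First, I would dispose of the direction (frame integrable $\Rightarrow$ both distributional conditions). By Definition \ref{def:Iframe}, some coordinates $(x^1,\ldots,x^n)$ exist with $Y_i = f_i^{-1}\partial/\partial x^i$ for nowhere vanishing $f_i$. Since the natural frame satisfies $[\partial/\partial x^i,\partial/\partial x^j]=0$, any commutator $[Y_i,Y_j]$ is a combination of $\partial/\partial x^i$ and $\partial/\partial x^j$ alone, hence of $Y_i$ and $Y_j$. This yields both conditions simultaneously.

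Next, I would check directly that the two bulleted conditions are equivalent. Expand $[Y_j,Y_k] = \sum_l c_{jk}^l\, Y_l$ in the frame. The pairwise Frobenius condition says $c_{jk}^l=0$ for $l\notin\{j,k\}$. The codimension-one condition requires $[Y_j,Y_k]\in \mathcal{E}_i$ for all $i\ne j,k$, which amounts to $c_{jk}^i=0$ for all $i\ne j,k$ — exactly the same constraints. This makes the two conditions literally the same statement on the structure functions, so the equivalence is automatic.

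The main work lies in the converse direction. Assuming each $\mathcal{E}_i$ is Frobenius integrable, I would, on a common small neighborhood $U$, invoke Frobenius to produce for each $i$ a smooth function $\phi_i:U\to\mathbb{R}$ whose level sets are leaves of $\mathcal{E}_i$; equivalently, $\rd\phi_i(Y_j)=0$ for every $j\ne i$. Since $Y_i\notin \mathcal{E}_i$ (as the $Y_j$'s form a frame), $\rd\phi_i(Y_i)$ is nowhere zero on $U$. Assembling the matrix $M_{ij}:=\rd\phi_i(Y_j)$, it is diagonal with nonvanishing diagonal entries, hence invertible; equivalently, $\rd\phi_1\wedge\cdots\wedge\rd\phi_n\ne 0$ on $U$. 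Therefore $(\phi_1,\ldots,\phi_n)$ is a local chart, and by construction each $Y_i$ lies in the kernel of every $\rd\phi_j$ with $j\ne i$, so $Y_i = g_i\,\partial/\partial\phi_i$ for some nowhere vanishing $g_i$. Thus $\{Y_1,\ldots,Y_n\}$ is equivalent to the natural frame of the chart $(\phi_1,\ldots,\phi_n)$, which is integrability in the sense of Definition \ref{def:Iframe}.

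The principal obstacle is the last step: producing the foliating functions $\phi_i$ on a \emph{common} neighborhood and then certifying that they form a coordinate system. The neighborhood issue is routine (intersect the $n$ domains given by Frobenius), and the coordinate certification reduces to the diagonal structure of $M_{ij}$ noted above — an almost tautological consequence of the defining property $\rd\phi_i|_{\mathcal{E}_i}=0$, which is what makes the whole proposition work.
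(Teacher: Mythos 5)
The paper offers no proof of this proposition --- it is quoted from \cite{BCR02} without argument --- so there is no internal proof to compare against; your write-up is correct and complete as a self-contained substitute. The easy direction and the identification of the two bulleted conditions through the structure functions $c_{jk}^{l}$ are both sound, and the substantive converse step --- applying Frobenius to each corank-one distribution $\mathcal{E}_i$ to obtain $\phi_i$ with $\ker \rd\phi_i=\mathcal{E}_i$, noting that the matrix $\rd\phi_i(Y_j)$ is diagonal with nonvanishing entries so that $(\phi_1,\ldots,\phi_n)$ is a local chart in which $Y_i=g_i\,\partial/\partial\phi_i$ --- is exactly the standard route to equivalence with a natural frame in the sense of Definition \ref{def:Iframe}.
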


\begin{definition}\label{def:Kdiag}
An operator field $\boldsymbol{L}$ is \textit{semisimple}
if, in each open neighborhood $U\subseteq M$, there exists a reference frame formed by proper eigenvector fields of $\boldsymbol{L}$. This frame will be said to be an eigen-frame of $\boldsymbol{L}$. Moreover, $\boldsymbol{L}$ is \textit{simple} if all of its eigenvalues are pointwise distinct, namely $l_i (\boldsymbol{x}) \neq l_j(\boldsymbol{x})$, $i, j=1,\ldots,n$, $ \forall \boldsymbol{x}\in M$.
\end{definition}
An important problem is to establish the conditions ensuring that the eigen-frames of $\boldsymbol{L}$ are integrable. Proposition \ref{pr:Hd} amounts to say that if an  operator admits a local chart in which it takes a diagonal form, then its Haantjes torsion \emph{necessarily} vanishes. Therefore,   the associated natural frame  is an eigen-frame (trivially) integrable.
 In 1955, Haantjes proved in~\cite{Haa} that the vanishing of the  Haantjes torsion of a \emph{semisimple} (with possibly coinciding eigenvalues) operator $\boldsymbol{L}$ is also a \emph{sufficient} condition
to ensure  the integrability of
each  of its  eigen-distributions and their direct sums; consequently,  it also guarantees the existence of local coordinate charts in  which $\boldsymbol{L}$ takes a  diagonal form.
We shall say that such coordinates are a set of \emph{Haantjes coordinates} for $\boldsymbol{L}$. We remark that, in the framework of hydrodynamic systems, when the eigenvalues are simple, the Haantjes coordinates coincide with the \textit{Riemann invariants} of the system. Furthermore, Haantjes stated that the vanishing
of the  Haantjes torsion of an   operator with real eigenvalues $\boldsymbol{L}$ is also a sufficient (but   not necessary) condition to ensure the integrability of  each  of its generalized
eigen-distributions (of constant rank) and their direct sums. An equivalent statement of the above-mentioned results  is that each Haantjes operator with real eigenvalues  admits an  integrable frame of generalized eigenvector fields.

Below, we prove necessary and sufficient conditions for the integrability of generalized eigen-distributions of  \textit{non-semisimple} operator fields.
 Let us denote by $Spec(\boldsymbol{L}):= \{ l_1(\boldsymbol{x}),
 l_2(\boldsymbol{x}), \ldots, l_s(\boldsymbol{x})\}$  the set of the eigenvalues of an operator $\boldsymbol{L}$, which  we shall always assume to be \emph{real} and pointwise distinct in all the forthcoming considerations. Also, we denote by
\begin{equation}
 \label{eq:DisL}
 \mathcal{D}_i = \ker \Bigl(\boldsymbol{L}-l_i\boldsymbol{I}\Bigr)^{\rho_i}, \qquad i=1,\ldots,s
\end{equation}
the $i$th generalized eigen-distribution, that is  the distribution of all the  generalized eigenvector fields  corresponding to the eigenvalue $l_i=l_i(\boldsymbol{x})$. In Eq.~\eqref{eq:DisL}, $\rho_i$ stands for the Riesz index of $l_i$ (or equivalently, the index of the associated eigen-distribution), defined as the minimum integer such that
\begin{equation}
 \label{eq:Riesz}
\ker \Bigl(\boldsymbol{L}-l_i\boldsymbol{I}\Bigr)^{\rho_i} = \ker \Bigl(\boldsymbol{L}-l_i\boldsymbol{I}\Bigr)^{\rho_{i}+1} \ ,
\end{equation}
which  we shall always assume to be independent of $\boldsymbol{x}$ (the Riesz index of $\boldsymbol{L}$, introduced in Lemma 4, corresponds to the case $l_i=0$ ).
When $\rho_i=1$,  $\mathcal{D}_i$ is a proper eigen-distribution.
\begin{definition}\label{def:ss}
  A generalized  \emph{eigen-frame} of an operator field  $\boldsymbol{L}$ is a local reference frame of generalized eigenvector fields of $\boldsymbol{L}$.
\end{definition}
Let us recall that the tangent space at any point of $\boldsymbol{x}\in M$ admits the spectral decomposition
\begin{equation}
 \label{eq:TMscomp}
T_{\boldsymbol{x}}M=\bigoplus_{i=1} ^s \mathcal{D}_i(\boldsymbol{x})\ . \quad \qquad
\end{equation}


\begin{theorem}\label{th:HTT}
Let $\boldsymbol{L}: TM \to TM$ be an   operator field and $\mathcal{D}_i$ one of its generalized eigen-distributions, with Riesz index $\rho_i$, $i\in \{1,\ldots, s\}$. We assume that the rank of  $\mathcal{D}_i$ is independent of $\boldsymbol{x}\in M$. Then, the following three conditions are equivalent:
\begin{enumerate}
\item
The distribution $\mathcal{D}_i$ is involutive;

\item
$
{\tau}_{(\boldsymbol{L}-l_i \boldsymbol{I})^{\rho_i}}(\mathcal{D}_i , \mathcal{D}_i )= \mathbf{0}  \ ;
$

\item
$
\mathcal{H}_{(\boldsymbol{L}-l_i \boldsymbol{I})^{\rho_i}}(\mathcal{D}_i , \mathcal{D}_i )= \mathbf{0}\ .
$

\end{enumerate}
\end{theorem}
\begin{proof}
First, let us prove that (1) $\Leftrightarrow$ (2). From Eqs.~\eqref{eq:DisL} and \eqref{eq:NijAkerA} we obtain
\[
{\tau}_{(\boldsymbol{L}-l_i \boldsymbol{I})^{\rho_i}}(\mathcal{D}_i, \mathcal{D}_i)=
(\boldsymbol{L}-l_i \boldsymbol{I})^{2\rho_i} \bigl[\mathcal{D}_i,\mathcal{D}_i\bigr ] \ .
\]
Therefore, we deduce that
\[
[\mathcal{D}_i,\mathcal{D}_i]
\subseteq \ker \Bigl(\boldsymbol{L}-l_i\boldsymbol{I}\Bigr)^{2\rho_i}\stackrel{\text{\eqref{eq:Riesz}}}=
\ker \Bigl(\boldsymbol{L}-l_i\boldsymbol{I}\Bigr)^{\rho_i}
\]
if and only if condition (2) is fulfilled.
Obviously, condition (2) implies (3). Consequently, condition  (1) implies (3).
Finally, let us prove that condition (3) implies (1).
In fact, from Eqs.~\eqref{eq:DisL} and \eqref{eq:HaanAkerA} it follows that
\[
\mathcal{H}_{(\boldsymbol{L}-l_i \boldsymbol{I})^{\rho_i}}(\mathcal{D}_i, \mathcal{D}_i)=
(\boldsymbol{L}-l_i \boldsymbol{I})^{4\rho_i} \bigl[\mathcal{D}_i,\mathcal{D}_i\bigr] \ .
\]
Therefore,
\[
[\mathcal{D}_i,\mathcal{D}_i]
\subseteq \ker \Bigl(\boldsymbol{L}-l_i\boldsymbol{I}\Bigr)^{4\rho_i}\stackrel{\text{\eqref{eq:Riesz}}}=
\ker \Bigl(\boldsymbol{L}-l_i\boldsymbol{I}\Bigr)^{\rho_i} \ ,
\]
if  condition (3) is fulfilled.
\end{proof}
\begin{theorem}\label{pr:FXmn}
Let $\boldsymbol{L}: TM \to TM$ be an operator field, and $\mathcal{D}_i$, $  \mathcal{D}_j$  two   involutive eigen-distributions with Riesz indices $\rho_i$ and $\rho_j$,
respectively ($i,j \in \{1,\ldots, s\}$). Then, the following three conditions are equivalent:
\begin{enumerate}
\item
the distribution $\mathcal{D}_i  \oplus \mathcal{D}_j$ is involutive;
\item
$
{\tau}_{(\boldsymbol{L}-l_i \boldsymbol{I})^{\rho_i}(\boldsymbol{L}-l_j \boldsymbol{I})^{\rho_j}}(\mathcal{D}_i  \oplus \mathcal{D}_j, \mathcal{D}_i  \oplus \mathcal{D}_j)=\mathbf{0};
$

\item
$
\mathcal{H}_{(\boldsymbol{L}-l_i \boldsymbol{I})^{\rho_i}(\boldsymbol{L}-l_j \boldsymbol{I})^{\rho_j}}(\mathcal{D}_i  \oplus \mathcal{D}_j, \mathcal{D}_i  \oplus \mathcal{D}_j)=\mathbf{0} \ .
$
\end{enumerate}
\end{theorem}
\begin{proof}
Firstly, let us prove that (1) $\Leftrightarrow$ (2).
From Eq.~\eqref{eq:NijAkerA} applied to $\boldsymbol{A} := (\boldsymbol{L}-l_i \boldsymbol{I})^{\rho_i}$, $\boldsymbol{B} :=  (\boldsymbol{L}-l_j \boldsymbol{I})^{\rho_j}$ and  \eqref{eq:rankAB} it follows that
\[
{\tau}_{\boldsymbol{A} \boldsymbol{B}}(\mathcal{D}_i  \oplus \mathcal{D}_j, \mathcal{D}_i  \oplus \mathcal{D}_j)= \boldsymbol{A}^2\boldsymbol{B}^2[\mathcal{D}_i  \oplus \mathcal{D}_j, \mathcal{D}_i  \oplus \mathcal{D}_j]=
\boldsymbol{A}^2 \boldsymbol{B}^2
[\mathcal{D}_i,\mathcal{D}_j]\ .
\]
Therefore,
\[
[\mathcal{D}_i,\mathcal{D}_j]
\subseteq \ker \boldsymbol{A}^{2}\boldsymbol{B} ^2\stackrel{\text{\eqref{eq:Riesz},\eqref{eq:cor6}}}
=\ker  \boldsymbol{A} \boldsymbol{B} \stackrel{\text{\eqref{eq:rankAB}}}=\ker \boldsymbol{A}\oplus \ker \boldsymbol{B}\ \ ,
\]
if and only if  condition (2) is fulfilled. Obviously, condition (2) implies (3), therefore (1) implies (3).
Finally,  from Eq.~\eqref{eq:rankAB} and \eqref{eq:HaanAkerA} it follows that
\[
\mathcal{H}_{\boldsymbol{A}\boldsymbol{B}}(\mathcal{D}_i  \oplus \mathcal{D}_j, \mathcal{D}_i  \oplus \mathcal{D}_j)= \boldsymbol{A}^4\boldsymbol{B}^4[\mathcal{D}_i  \oplus \mathcal{D}_j, \mathcal{D}_i  \oplus \mathcal{D}_j]=
\boldsymbol{A}^4 \boldsymbol{B}^{4}
[\mathcal{D}_i,\mathcal{D}_j]\ .
\]
Consequently
\[
[\mathcal{D}_i,\mathcal{D}_j]
\subseteq \ker \boldsymbol{A}^4\boldsymbol{B}^4\stackrel{\text{\eqref{eq:Riesz},\eqref{eq:cor6}}}=
\ker  \boldsymbol{A} \boldsymbol{B} \stackrel{\text{\eqref{eq:rankAB}}}=\ker \boldsymbol{A}\oplus \ker \boldsymbol{B} \ , 
\]
if  condition  (3) is satisfied.
\end{proof}

\subsection{A new proof of the Haantjes theorem}

\begin{definition}\label{def:mI}
Let us consider a set of distributions $\{\mathcal{D}_i, \mathcal{D}_j, \ldots, \mathcal{D}_k \}$.
We shall say that such distributions  are \textit{mutually integrable} if

(i) each of them is integrable;

(ii) any  sum $\mathcal{D}_i  + \mathcal{D}_j +\cdots +\mathcal{D}_k$ (where all indices $i,j,\ldots, k$ are different) is also integrable.
\end{definition}
Now, we are in the position to recover the Haantjes theorem as a consequence of   Theorems \ref{th:HTT} and \ref{pr:FXmn}.

\begin{theorem}[\cite{Haa}]\label{th:Haan}
Let $\boldsymbol{L}: TM \to TM$ be an   operator field, and assume that the rank of each generalized eigen-distribution $\mathcal{D}_i$, $i=1,\ldots, s$ is independent of $\boldsymbol{x}\in M$.
The vanishing of the Haantjes torsion
\begin{equation}
 \label{eq:HaaNullTM}
\mathcal{H}_{\boldsymbol{L}}(X, Y)= \mathbf{0} \qquad\qquad\qquad \forall ~ X, Y \in TM
\end{equation}
is a  sufficient condition to ensure the mutual integrability of the  generalized eigen-distributions $\{\mathcal{D}_1,\ldots, \mathcal{D}_s\}$.
In addition, if $\boldsymbol{L}$ is  semisimple, condition \eqref{eq:HaaNullTM} is also necessary.
\end{theorem}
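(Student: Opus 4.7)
The plan is to reduce Theorem \ref{th:Haan} to the already-established Theorems \ref{th:HTT} and \ref{pr:FXmn} by exploiting the polynomial-invariance principle of Proposition \ref{pr:Lpowers}. The key observation is that for any subset $\{i_1,\ldots,i_k\}\subseteq\{1,\ldots,s\}$, the operator $\prod_{r=1}^{k}(\boldsymbol{L}-l_{i_r}\boldsymbol{I})^{\rho_{i_r}}$ is a polynomial in $\boldsymbol{L}$ with coefficients in $C^\infty(M)$, so the hypothesis $\mathcal{H}_{\boldsymbol{L}}\equiv 0$ propagates to every such polynomial.

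For the sufficiency direction, I would first take $k=1$: since $\mathcal{H}_{(\boldsymbol{L}-l_i\boldsymbol{I})^{\rho_i}}\equiv 0$, condition (3) of Theorem \ref{th:HTT} holds on $\mathcal{D}_i\times\mathcal{D}_i$ and yields involutivity of each $\mathcal{D}_i$. Taking $k=2$ analogously gives condition (3) of Theorem \ref{pr:FXmn}, hence involutivity of each $\mathcal{D}_i\oplus\mathcal{D}_j$. For arbitrary $k$ one must extend Theorem \ref{pr:FXmn}: the pairwise coprimality of the factors $(\boldsymbol{L}-l_{i_r}\boldsymbol{I})^{\rho_{i_r}}$ combined with iterated use of the kernel-splitting identity \eqref{eq:Dij} yields
\[
\ker\!\Bigl(\prod_{r=1}^{k}(\boldsymbol{L}-l_{i_r}\boldsymbol{I})^{\rho_{i_r}}\Bigr)=\bigoplus_{r=1}^{k}\mathcal{D}_{i_r},
\]
and the same computation used in the proof of Theorem \ref{pr:FXmn}, together with Lemma \ref{lm:Fond}, confines $[\mathcal{D}_{i_1}\oplus\cdots\oplus\mathcal{D}_{i_k},\,\mathcal{D}_{i_1}\oplus\cdots\oplus\mathcal{D}_{i_k}]$ to the same direct sum.

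For the necessity when $\boldsymbol{L}$ is semisimple, all Riesz indices equal $1$ and each $\mathcal{D}_i=\ker(\boldsymbol{L}-l_i\boldsymbol{I})$ is a proper eigen-distribution. Choosing within each $\mathcal{D}_i$ a local frame of eigenvector fields and juxtaposing them produces a local eigen-frame $\{Y_1,\ldots,Y_n\}$ of $\boldsymbol{L}$. Mutual integrability of the $\mathcal{D}_i$'s asserts precisely that every codimension-one direct sum is involutive, so Proposition \ref{pr:BCRframe} (in its second equivalent formulation) shows the frame $\{Y_1,\ldots,Y_n\}$ to be integrable. Hence a local chart exists in which each $Y_i$ is proportional to $\partial/\partial x^i$, forcing $\boldsymbol{L}$ into the diagonal form \eqref{eq:Ldiagonal}; Proposition \ref{pr:Hd} then gives $\mathcal{H}_{\boldsymbol{L}}\equiv 0$.

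The main obstacle I anticipate is the inductive promotion from the two-distribution case of Theorem \ref{pr:FXmn} to arbitrary $k$-fold direct sums. Each application of \eqref{eq:Dij} requires the two factors to commute and their kernels to intersect trivially; both properties do persist for polynomials in $\boldsymbol{L}$ indexed by distinct eigenvalues, but verifying the splitting of $\ker\prod_{r}(\boldsymbol{L}-l_{i_r}\boldsymbol{I})^{\rho_{i_r}}$ at each inductive step requires some bookkeeping. The necessity part, by contrast, is essentially a direct application of Proposition \ref{pr:BCRframe} followed by Proposition \ref{pr:Hd}, and presents no structural difficulty.
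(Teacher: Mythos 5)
Your sufficiency argument coincides with the paper's: Proposition \ref{pr:Lpowers} propagates the vanishing of $\mathcal{H}_{\boldsymbol{L}}$ to the polynomials $(\boldsymbol{L}-l_i\boldsymbol{I})^{\rho_i}$ and $(\boldsymbol{L}-l_i\boldsymbol{I})^{\rho_i}(\boldsymbol{L}-l_j\boldsymbol{I})^{\rho_j}$, and Theorems \ref{th:HTT} and \ref{pr:FXmn} then give involutivity of each $\mathcal{D}_i$ and of each $\mathcal{D}_i\oplus\mathcal{D}_j$. The $k$-fold extension you anticipate as the main obstacle is not actually needed: a section of $\mathcal{D}_{i_1}\oplus\cdots\oplus\mathcal{D}_{i_k}$ is a sum of sections of the individual $\mathcal{D}_{i_r}$, the bracket is bilinear, and the Leibniz correction terms stay inside the individual distributions, so involutivity of every pairwise sum already forces involutivity of every $k$-fold sum.

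The necessity half contains a genuine gap. You juxtapose arbitrary local eigen-frames of the $\mathcal{D}_i$ and claim that mutual integrability delivers the hypotheses of Proposition \ref{pr:BCRframe}. But the second condition of that proposition requires integrability of the distributions obtained by deleting one \emph{single} vector field $Y_i$ from the frame, which have codimension one; mutual integrability only gives integrability of the sums $\bigoplus_{j\neq i}\mathcal{D}_j$, which have codimension $r_i=\mathrm{rank}\,\mathcal{D}_i$. The two coincide only when every eigenvalue is simple. When some $r_i\geq 2$ the step fails outright: for $\boldsymbol{L}=f\boldsymbol{I}$ every frame is an eigen-frame and mutual integrability is vacuously true, yet a generic frame is not integrable, so your argument would ``prove'' that an arbitrary frame is integrable. (The paper itself records, in Proposition \ref{cor:Hframe}, that only for \emph{simple} operators is every eigen-frame integrable.) Your strategy can be repaired by building the diagonalizing chart not from an arbitrary eigen-frame but from exact one-forms chosen in the annihilators $\mathcal{E}_i^{\circ}$ of the characteristic distributions $\mathcal{E}_i=\bigoplus_{j\neq i}\mathcal{D}_j$, as in the construction underlying Proposition \ref{cor:Hframe}, after which Proposition \ref{pr:Hd} applies as you intend. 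The paper instead argues without coordinates: since each $\rho_i=1$, Proposition \ref{pr:fL} gives $\mathcal{H}_{\boldsymbol{L}-l_i\boldsymbol{I}}=\mathcal{H}_{\boldsymbol{L}}$, and conditions (3) of Theorems \ref{th:HTT} and \ref{pr:FXmn}, combined with the spectral decomposition \eqref{eq:TMscomp}, yield $\mathcal{H}_{\boldsymbol{L}}(X,Y)=0$ directly on pairs of eigenvectors.
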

\begin{proof}
 In the non-semisimple case, according to Proposition \ref{pr:Lpowers}, if  condition  \eqref{eq:HaaNullTM} is satisfied, then the  conditions (3) of Theorems  \ref{th:HTT} and \ref{pr:FXmn}  hold for $i=1,\ldots,s$; therefore, the distributions $\{\mathcal{D}_1,\ldots, \mathcal{D}_s\}$ are mutually integrable.

Conversely, if the   distributions  $\{\mathcal{D}_1,\ldots, \mathcal{D}_s\}$
are mutually integrable, then the conditions (3) of Theorems \ref{th:HTT} and \ref{pr:FXmn}  are satisfied for all eigen-distributions $\mathcal{D}_i$.  In addition, if $\boldsymbol{L}$ is semisimple, then each Riesz index $\rho_i=1$, and Proposition \ref{pr:fL}  implies that
\begin{equation}
 \label{eq:Haux}
\mathcal{H}_{\boldsymbol{L}-l_i \boldsymbol{I}}(X , Y)= \mathcal{H}_{\boldsymbol{L}}(X , Y) ,  \qquad\qquad
\forall ~ X, Y \in TM , \qquad i=1,\ldots,s \ .
\end{equation}
Therefore, due to the spectral decomposition \eqref{eq:TMscomp}, we deduce that condition \eqref{eq:HaaNullTM} is fulfilled.
\end{proof}
\begin{remark}
Due to Theorems \ref{th:HTT} and \ref{pr:FXmn}, conditions (2) are  not stronger than conditions (3) but equivalent. This can be understood by observing that if a distribution $\mathcal{D}_i $ is integrable, then the operator $(\boldsymbol{L}-l_i \boldsymbol{I})^{\rho_i}$ can be restricted to each integral leaf of $\mathcal{D}_i $ and such restriction vanishes. Thus, one might wonder, at least in the semisimple case,  whether there exists an analogous of Proposition \ref{pr:fL} for the  Nijenhuis torsion. This is not the case as, unlike Eq.~\eqref{eq:Haux}, we have
\begin{equation}
\label{eq:Taux}
{\tau}_{\boldsymbol{L}-l_i \boldsymbol{I}}(X , Y)={\tau}_{\boldsymbol{L}}(X , Y)
+ ( \boldsymbol{L}Y)(l_i)X-( \boldsymbol{L}X)(l_i)Y
\ ,
\end{equation}
which does not coincide with ${\tau}_{\boldsymbol{L}}(X , Y)$ unless the eigenvalue $l_i$ is constant.
\end{remark}

In the article by Haantjes~\cite{Haa}, the original proof of  Theorem \ref{th:Haan}, which was developed in a completely different way, is explicitly carried out  only for the case of a semisimple operator.
In~\cite{DeF},  the integrability of the eigen-distributions of a \textit{Nijenhuis operator} with generalized eigenvectors of Riesz index $2$ was proved. However, the case of Haantjes operators was not
considered. Besides, to the best of our knowledge, the proofs of the Haantjes theorem available in the literature (see for instance~\cite{FN,GVY}) are based on the quite restrictive assumption that the Haantjes operator is semisimple.

Let us show in  detail how we can  determine a coordinate system that, under  the assumptions of Theorem \ref{th:Haan}, allows us to write a Haantjes operator $\boldsymbol{L}$ in a  block-diagonal  form.
Denote by
\begin{equation}
 \label{eq:E}
 \mathcal{E}_i := Im \Bigl(\boldsymbol{L}-l_i\mathbf{I}\Bigr)^{\rho_i }=  \bigoplus_{{j=1,\, j\neq i}} ^s  \mathcal{D}_j, \qquad\qquad i=1,\ldots,s
\end{equation}
  the  distribution of corank $r_i$ (being $r_i$ the rank of $\mathcal{D}_i$), spanned by all the generalized eigenvectors of  $\boldsymbol{L}$, except those associated with the eigenvalue $l_i$.   Such a distribution  will be said to be a \emph{characteristic distribution} of  $\boldsymbol{L}$. Let $\mathcal{E}^{\circ}_{i}$ denote the annihilator of the distribution $\mathcal{E}_{i}$. Observe that, since $\boldsymbol{L}$  by hypothesis has real eigenvalues, the   cotangent spaces of $M$ can be locally decomposed as
\begin{equation}
 \label{eq:TMdscomp}
T_{\boldsymbol{x}}^*M=\bigoplus_{i=1} ^s  \mathcal{E}_{i}^{\circ}(\boldsymbol{x}).
\end{equation}
Moreover,  each  characteristic distribution    $\mathcal{E}_i$  is  integrable by virtue of  Theorem  \ref{th:Haan}. We shall denote by $ \mathrm{E}_i$ the foliation associated  with $\mathcal{E}_i$  and by
$E_i(\boldsymbol{x})$ the connected leave through $\boldsymbol{x}$  belonging to $ \mathrm{E}_i$.  Thus,  the set  of distributions $\{\mathcal{E}_1, \mathcal{E}_2, \ldots ,\mathcal{E}_s\}$
generates  as many  foliations $\{ \mathrm{E}_1,  \mathrm{E}_2, \ldots ,  \mathrm{E}_s\}$ as the number of distinct eigenvalues of $\boldsymbol{L}$.   This set of foliations
  will be referred to as the \textit{characteristic  web} of $\boldsymbol{L}$ and the leaves $E_i(\boldsymbol{x})$ of each foliation $ \mathrm{E}_i$ as the  \emph{characteristic fibers} of the web.

\begin{definition}
A collection of $r_i$ smooth functions will be said to be adapted to a foliation $E_i$ of the  characteristic  \textit{web} of a Haantjes operator $\boldsymbol{L}$  if the level sets of such functions coincide with the characteristic fibers of the foliation $E_i$.
\end{definition}
\begin{definition}
A parametrization of the characteristic web of a Haantjes  operator $\boldsymbol{L}$   is an ordered set  of $n$ independent smooth functions grouped as
$(\boldsymbol{f}^1, \ldots,\boldsymbol{f}^i,\ldots, \boldsymbol{f}^s) $
such that for any $i=1,\ldots, s$, the ordered subset
$\boldsymbol{f}^i=(f^{i,1}, \ldots , f^{i,r_i})$ is adapted to the  $i$th characteristic foliation of the web:
\begin{equation}
\label{eq:fad}
f^{i,k}_{\vert   E_i(\mathbf{x})}=c^{i,k}  \qquad \forall E_i (\mathbf{x})\in  \mathrm{E}_i \ ,\quad
k=1,\ldots,r_i\ ,
\end{equation}
where $c^{i,k}$ are real constants depending  on the indices $i$ and $k$ only.
In this case, we shall say that the  collection of all these functions is adapted to the web and that each of them is a \emph{characteristic function}.
\end{definition}
\begin{proposition}\label{cor:Hframe}
The vanishing of the Haantjes torsion of an operator field $\boldsymbol{L}$ is a  sufficient condition to ensure that $\boldsymbol{L}$ admits an equivalence class  of integrable generalized eigen-frames, where $\boldsymbol{L}$ takes a block-diagonal form. Furthermore, if $\boldsymbol{L}$ is semisimple, the condition is also necessary and $\boldsymbol{L}$ takes a diagonal form. In addition, if $\boldsymbol{L}$ is \emph{simple} each  of its eigen-frames is integrable.
\end{proposition}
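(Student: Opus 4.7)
The plan is to convert the mutual integrability of the generalized eigen-distributions, guaranteed by Theorem \ref{th:Haan}, into an explicit local chart in which $\boldsymbol{L}$ becomes block-diagonal. First, I would invoke Theorem \ref{th:Haan} to conclude that each characteristic distribution $\mathcal{E}_i=\bigoplus_{j\neq i}\mathcal{D}_j$ is integrable of rank $n-r_i$, so that the associated foliation $\mathrm{E}_i$ admits $r_i$ independent local first integrals $f^{i,1},\ldots,f^{i,r_i}$ adapted to it in the sense of \eqref{eq:fad}. Since $df^{i,k}\in\mathcal{E}_i^{\circ}$ by construction and the cotangent decomposition \eqref{eq:TMdscomp} is direct, the full collection of $n=\sum_i r_i$ functions has everywhere independent differentials and hence provides a local coordinate chart on $M$.

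The key step is then to verify that the associated natural frame $\{\partial/\partial f^{i,k}\}$ is a generalized eigen-frame of $\boldsymbol{L}$. By duality, each $\partial/\partial f^{i,k}$ is annihilated by $df^{j,l}$ for every $j\neq i$, so it lies in the annihilator of $\bigoplus_{j\neq i}\mathcal{E}_j^{\circ}$. A short calculation using the decompositions \eqref{eq:TMscomp} and \eqref{eq:TMdscomp} identifies this annihilator with the generalized eigen-distribution $\mathcal{D}_i$, because $\bigcap_{j\neq i}\mathcal{E}_j=\mathcal{D}_i$. Since $\dim\mathcal{D}_i=r_i$, the vector fields $\partial/\partial f^{i,1},\ldots,\partial/\partial f^{i,r_i}$ form a pointwise basis of $\mathcal{D}_i$. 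The $\boldsymbol{L}$-invariance of $\mathcal{D}_i$ then forces the matrix of $\boldsymbol{L}$ in these coordinates to be block-diagonal, with one block of size $r_i$ attached to the eigenvalue $l_i$. The natural frame of a chart is trivially integrable, so its equivalence class supplies the required class of integrable generalized eigen-frames.

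For the two specialisations I would argue as follows. When $\boldsymbol{L}$ is semisimple every Riesz index equals $1$, so each $\mathcal{D}_i$ is a proper eigen-distribution; the $i$-th block collapses to $l_i\,\boldsymbol{I}_{r_i}$, making $\boldsymbol{L}$ diagonal in the chart. The necessity half is then immediate from Proposition \ref{pr:Hd}, which forces the Haantjes torsion to vanish for any operator that is diagonal in some chart. Finally, when $\boldsymbol{L}$ is simple each $\mathcal{D}_i$ is one-dimensional, so any eigen-frame agrees with $\{\partial/\partial f^{i,1}\}_{i=1,\ldots,n}$ up to nowhere-vanishing scalar factors; by Definition \ref{def:Iframe} it is therefore integrable.

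The main delicate point is the duality computation in the second paragraph. While the existence of first integrals adapted to each $\mathcal{E}_i$ is a routine Frobenius statement, identifying $\partial/\partial f^{i,k}$ as a generator of $\mathcal{D}_i$ rather than of the much larger $\mathcal{E}_i$ requires exploiting the direct-sum decompositions \eqref{eq:TMscomp} and \eqref{eq:TMdscomp} in tandem. This step is precisely where the abstract integrability of the characteristic web is converted into an explicit block-diagonal normal form for $\boldsymbol{L}$.
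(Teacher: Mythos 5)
Your proposal is correct and follows essentially the same route as the paper: integrability of the characteristic distributions $\mathcal{E}_i$ via Theorem \ref{th:Haan}, adapted coordinates built from exact one-forms in the annihilators $\mathcal{E}_i^{\circ}$, and the duality $\mathcal{D}_i=\bigl(\bigoplus_{j\neq i}\mathcal{E}_j^{\circ}\bigr)^{\circ}$ to identify the natural frame as a generalized eigen-frame spanning each $\mathcal{D}_i$, whence the block-diagonal (resp.\ diagonal) form and the converse via Proposition \ref{pr:Hd}. The only cosmetic deviation is in the simple case, where you argue directly from Definition \ref{def:Iframe} (any eigen-frame is equivalent to the natural one) while the paper invokes Proposition \ref{pr:BCRframe}; both justifications are valid.
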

\begin{proof}
 Each characteristic distribution $\mathcal{E}_i$ is integrable, by virtue of the Haantjes Theorem \ref{th:Haan}. Therefore, in the corresponding annihilator $\mathcal{E}_i^\circ$ one can find  $r_i$ exact one-forms $(\rd x^{i,1}, \ldots, \rd x^{i,r_i})$ that provide us with  functions
$\boldsymbol{x}^{i}=(x^{i,1}, \ldots, x^{i,r_i})$ adapted to the characteristic foliation $\mathrm{E}_i$. Collecting together all these functions, one can construct a set of $n$ independent coordinates, that we group as  $(\boldsymbol{x}^1, \ldots,\boldsymbol{x}^i,\ldots, \boldsymbol{x}^s)$ and consequently, a local chart $\{U, (\boldsymbol{x}^1, \ldots, \boldsymbol{x}^i,\ldots, \boldsymbol{x}^s)\}$, adapted to the characteristic web.
The corresponding natural frame
$\left\{\frac{\partial}{\partial \boldsymbol{x}^1}, \ldots, \frac{\partial}{\partial \boldsymbol{x}^i},
\ldots,\frac{\partial}{\partial \boldsymbol{x}^s}\right\}$
turns out to be  a generalized eigen-frame. In fact, as
\begin{equation}
 \label{eq:Dann}
\mathcal{D}_i^\circ =\bigoplus_{{j=1,\, j\neq i}} ^s  \mathcal{E}_{j}^{\circ},
\end{equation}
any  generalized eigenvector $W \in \mathcal{D}_i$ leaves invariant all the coordinate functions except at most the characteristic functions $\boldsymbol{x}^i=(x^{i,1}, \ldots, x^{i,r_i})$ of $\mathrm{E}_i$.
Thus, we have that $W=W(\boldsymbol{x}^i)\frac{\partial}{\partial \boldsymbol{x}^i}= \sum_{k=1}^{r_i}  W(x^{i,k}) \frac{\partial}{\partial x^{i,k}}$, therefore
\begin{equation}
\label{eq:Dbase}
\mathcal{D}_{i_{\vert U}}=\left\langle \frac{\partial}{\partial x^{i,1}},\ldots,
\frac{\partial}{\partial x^{i,r_i}}\right\rangle
\end{equation}
(hereafter the symbol $\langle   \rangle $ denotes the $C^{\infty}(M)$-linear span of the considered vector fields). Thus, each frame \emph{equivalent} to $\left\{\frac{\partial}{\partial \boldsymbol{x}^1}, \ldots, \frac{\partial}{\partial \boldsymbol{x}^i},\ldots,\frac{\partial}{\partial \boldsymbol{x}^s}\right\}$ is an integrable eigen-frame of \textit{generalized eigenvector fields}. Consequently, there exists an equivalence class of integrable frames and associated local charts where the operator $\boldsymbol{L}$ takes a block-diagonal form due to the invariance of its eigen-distributions.
Moreover, if $\boldsymbol{L}$ is semisimple, then its generalized eigen-frames are proper eigen-frames and the block-diagonal form reduces to a diagonal one.

Conversely, if there exists a local chart where $\boldsymbol{L}$ takes a diagonal form, then the corresponding natural frame is obviously integrable.
Thus, due to Proposition \ref{pr:Hd}, the Haantjes torsion of  $\boldsymbol{L}$  vanishes.
Finally,  if $\boldsymbol{L}$ is simple, each eigen-distribution has rank $1$; therefore, each natural eigen-frame fulfills the conditions of Proposition \ref{pr:BCRframe}. The last statement is equivalent  to the celebrated result that
 A. Nijenhuis published in 1951~\cite{Nij}.
\end{proof}
\begin{definition}\label{def:HchartL}
Let $\boldsymbol{L}:TM \to TM$ be a Haantjes operator. A local chart $\{ U, (x^1,\ldots,x^n)\}$ whose natural frame is a generalized eigen-frame  will
be said to be a Haantjes chart for $\boldsymbol{L}$.
\end{definition}
A Haantjes chart  for $\boldsymbol{L}$ can also be computed by using the transposed operator  $\boldsymbol{L}^{T}$.
Let us denote by
\begin{equation}
\label{eq:geigenf}
\ker \Bigl(\boldsymbol{L}^T-l_i\boldsymbol{I}\Bigr)^{\rho_i}
\end{equation}
 the $i$th distribution of the  generalized eigen $1$-forms  associated with the eigenvalue   $l_i(\boldsymbol{x})$,
  which fulfills the property
\begin{equation}
 \label{eq:autof}
 \ker(\boldsymbol{L}^T-l_i \boldsymbol{I})^{\rho_i}={\biggl( Im {\Bigl(\boldsymbol{L}-l_i\mathbf{I}\Bigr)}^{\rho_i } \biggr)}^\circ=
{\mathcal{E}}_{i}^\circ \ .
\end{equation}
Such a property implies that each  generalized eigenform of $\boldsymbol{L}^T$  annihilates all generalized eigenvectors of $\boldsymbol{L}$ associated with different eigenvalues. Moreover, it allows us to prove an interesting equivalence result.

\begin{proposition}\label{pr:autoformeL}
 Let  $\boldsymbol{L}: TM \to TM$ be a Haantjes operator.  The  differentials   of the characteristic coordinate  functions   are   exact generalized eigenforms for the transposed operator $\boldsymbol{L}^T$.
 Conversely, each  (locally) exact  generalized eigenform of $\boldsymbol{L}  ^T  $ provides us with a   \emph{characteristic function} for the Haantjes web of $\boldsymbol{L}$.
\end{proposition}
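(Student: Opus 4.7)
The plan is to use the identification \eqref{eq:autof}, namely $\ker(\boldsymbol{L}^T - l_i\boldsymbol{I})^{\rho_i}=\mathcal{E}_i^\circ$, as the bridge between characteristic functions of $\boldsymbol{L}$ and generalized eigenforms of $\boldsymbol{L}^T$. Once this identification is in hand, both implications reduce to a statement about which vectors a one-form annihilates.

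For the forward direction, I would start from a characteristic coordinate function $x^{i,k}$, adapted to the foliation $\mathrm{E}_i$ in the sense of \eqref{eq:fad}. By the definition of adaptedness, $x^{i,k}$ is constant on each leaf $E_i(\boldsymbol{x})$, and since $\mathcal{E}_i$ is the tangent distribution of $\mathrm{E}_i$, every vector field $W\in\mathcal{E}_i$ satisfies $W(x^{i,k})=0$. Equivalently, $\rd x^{i,k}$ annihilates $\mathcal{E}_i$, i.e., $\rd x^{i,k}\in\mathcal{E}_i^\circ$. Invoking \eqref{eq:autof}, this is exactly the statement that $\rd x^{i,k}\in\ker(\boldsymbol{L}^T-l_i\boldsymbol{I})^{\rho_i}$, so $\rd x^{i,k}$ is a (closed, hence exact on $U$) generalized eigenform of $\boldsymbol{L}^T$ with eigenvalue $l_i$.

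For the converse, I would take a locally exact generalized eigenform $\alpha=\rd f$ belonging to $\ker(\boldsymbol{L}^T-l_i\boldsymbol{I})^{\rho_i}$. Again by \eqref{eq:autof}, $\alpha\in\mathcal{E}_i^\circ$, so for every $W\in\mathcal{E}_i$ one has $W(f)=\langle \rd f,W\rangle=0$. Since the leaves of $\mathrm{E}_i$ are precisely the integral submanifolds of the involutive distribution $\mathcal{E}_i$ (integrability being guaranteed by Theorem \ref{th:Haan}), $f$ must be constant along each leaf $E_i(\boldsymbol{x})$. By the definition of a characteristic function, $f$ is therefore adapted to $\mathrm{E}_i$, i.e., it is a characteristic function for the Haantjes web of $\boldsymbol{L}$.

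There is essentially no obstacle beyond applying \eqref{eq:autof}; the only subtle point is making sure the local exactness hypothesis is used correctly in the converse so that one can pass from the one-form $\alpha$ to a genuine function $f$ whose level sets can be tested leaf-by-leaf. Once that is observed, the argument is a direct translation between $\mathcal{E}_i^\circ$ and $\ker(\boldsymbol{L}^T-l_i\boldsymbol{I})^{\rho_i}$, together with the tangency of $\mathcal{E}_i$ to the leaves of $\mathrm{E}_i$.
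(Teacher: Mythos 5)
Your argument is correct and follows exactly the route the paper intends: the proposition is stated immediately after the identity \eqref{eq:autof}, $\ker(\boldsymbol{L}^T-l_i\boldsymbol{I})^{\rho_i}=\mathcal{E}_i^\circ$, precisely because that identification (together with the fact that $\mathcal{E}_i$ is tangent to the leaves of $\mathrm{E}_i$) reduces both implications to the annihilation statement you give. The paper leaves this proof implicit, and your write-up supplies the missing details in the expected way.
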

The characteristic  functions of a Haantjes operator are characterized by the following, simple property.
\begin{proposition}\label{pr:fchar}
A function  $h\in C^{\infty}(M)$ is a characteristic function of a Haantjes operator $\boldsymbol{L}$, associated with its eigenvalue $l_i$, if and only if given a set of local coordinates adapted to the characteristic web
$(\boldsymbol{x}^1,\ldots,\boldsymbol{x}^i,\ldots,\boldsymbol{x}^s)$,
$h$ depends, at most,  on the subset of coordinates $\boldsymbol{x}^i=(x^{i,1}, \ldots, x^{i,r_i})$ that are constant over the leaves of the foliation $\mathrm{E}_i$.
\end{proposition}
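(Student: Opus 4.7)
The plan is to reduce the statement to the coordinate characterization of functions constant on the leaves of an integrable distribution, exploiting the adapted coordinates whose existence is guaranteed by Proposition \ref{cor:Hframe}. By definition, $h$ is a characteristic function associated with $l_i$ precisely when its level sets contain the characteristic fibers $E_i(\boldsymbol{x})$, i.e.\ when $h$ is constant along every leaf of the foliation $\mathrm{E}_i$ integrating $\mathcal{E}_i$. The task is therefore to translate this leafwise-constancy condition into a coordinate dependence condition in the adapted chart $(\boldsymbol{x}^1,\ldots,\boldsymbol{x}^s)$.

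For the forward direction, I would begin by recalling from \eqref{eq:Dbase} that $\mathcal{D}_{j|U}=\bigl\langle\partial/\partial x^{j,1},\ldots,\partial/\partial x^{j,r_j}\bigr\rangle$, and hence from \eqref{eq:E} that
\begin{equation*}
\mathcal{E}_{i|U}=\bigoplus_{j\neq i}\mathcal{D}_{j|U}=\Bigl\langle \tfrac{\partial}{\partial x^{j,k}}\,:\,j\neq i,\ k=1,\ldots,r_j\Bigr\rangle.
\end{equation*}
Assuming $h$ is a characteristic function of $\boldsymbol{L}$ relative to $l_i$, the leaves of $\mathrm{E}_i$ are integral submanifolds of $\mathcal{E}_{i}$, so $h$ is constant along the flow of each basis vector field above. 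This yields $\partial h/\partial x^{j,k}=0$ for every $j\neq i$ and every $k$, which means $h$ depends at most on $\boldsymbol{x}^i=(x^{i,1},\ldots,x^{i,r_i})$.

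For the converse, assuming that $h=h(\boldsymbol{x}^i)$ in the adapted chart, I would directly observe that $X(h)=0$ for every $X\in\mathcal{E}_{i|U}$ by the coordinate expression above. Since any two points on a connected leaf of $\mathrm{E}_i$ can be joined by a piecewise integral curve of $\mathcal{E}_i$, this gives $h_{|E_i(\boldsymbol{x})}=\mathrm{const}$ on every characteristic fiber of $\mathrm{E}_i$, which is exactly the defining property \eqref{eq:fad} of a characteristic function.

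Almost nothing here is a genuine obstacle: the entire content is in setting up the coordinate description of $\mathcal{E}_i$ from \eqref{eq:Dbase} and invoking the fact, guaranteed by the integrability part of Theorem \ref{th:Haan} and Proposition \ref{cor:Hframe}, that adapted charts exist. The only step that needs to be stated with a bit of care is that $h$ constant along each generator of $\mathcal{E}_i$ implies $h$ constant on each \emph{connected} leaf of $\mathrm{E}_i$, which is an immediate consequence of the definition of a foliation.
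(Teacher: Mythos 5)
Your proof is correct and follows essentially the same route as the paper: the paper phrases the argument through the annihilator, showing $\rd h\in\mathcal{E}_i^\circ$ if and only if $\rd h$ is a combination of $\rd x^{i,1},\ldots,\rd x^{i,r_i}$ in the adapted chart, while you express the dual statement that $X(h)=0$ for the coordinate vector fields spanning $\mathcal{E}_i$ via \eqref{eq:Dbase} and \eqref{eq:E}. The two formulations are equivalent, and your extra remark about propagating constancy to connected leaves is the same implicit step the paper relies on.
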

\begin{proof}
 If $h=h (x^{i,1}, \ldots, x^{i,r_i})$, it is constant on the leaves of $\mathrm{E}_i$, then $\rd h \in \mathcal{E}_i^\circ$.
Vice versa, if we assume that $\rd h\in \mathcal{E}_i^\circ$, then it can be expressed in terms of a linear combination  (with functions as coefficients) of $\{\rd x^{i,1}, \ldots, \rd x^{i,r_i}\}$ only. The result follows by observing that $\rd h$ is an exact 1-form.
\end{proof}

\section{Characteristic Haantjes coordinates and block  diagonalization}

We shall present below some of our main results, concerning the existence of charts of coordinates in which a family of commuting Haantjes operators takes a block-diagonal form.
\begin{proposition}\label{th:HJ}
Let  $\mathcal{K}=\{\boldsymbol{K}_1,\ldots,\boldsymbol{K}_w\}$, $\boldsymbol{K}_\alpha: TM\to TM$, $\alpha=1,\ldots,w$ be  a family  of commuting   operator fields; we assume that  one of them, say $\boldsymbol{K}_1$, has vanishing Haantjes torsion.  Then, there exist local charts in which all of the operators $\boldsymbol{K}_\alpha$ can be written simultaneously in a block-diagonal form. In addition, if we assume that

(i) all the operators of the family have  vanishing Haantjes torsion,

(ii) all possible nontrivial intersections of their generalized eigen-distributions
\begin{equation}
\label{eq:Va}
\mathcal{V}_a(\boldsymbol{x}):=  \bigoplus_{i_1,\ldots,i_w}^{s_1,\ldots,s_w}\mathcal{D}_{i_1}^{(1)}(\boldsymbol{x}) \bigcap  \ldots      \bigcap \mathcal{D}_{i_w}^{(w)}(\boldsymbol{x}) \ ,\qquad a=1,\ldots, v\leq n
\end{equation}
are mutually integrable,
 then  there exist  sets of local coordinates, adapted to the decomposition
\begin{equation}
 \label{eq:TVa}
T_{\boldsymbol{x}}M= \bigoplus_{a=1}^{v}\mathcal{V}_a( \boldsymbol{x}) \qquad  \boldsymbol{x} \in M ,
\end{equation}
in which all operators $\boldsymbol{K}_\alpha$  admit  simultaneously a block-diagonal form (with  possibly finer blocks).
\end{proposition}
\begin{proof}
Due to Proposition \ref{cor:Hframe}, there exists an equivalence class of integrable frames and local charts where the Haantjes operator $\boldsymbol{K}_1$ takes a block-diagonal form.
Such coordinates  are adapted to the characteristic web  associated with the spectral decomposition of $\boldsymbol{K}_1$:
\begin{equation}
\label{eq:decomp}
T_{\boldsymbol{x}}M= \mathcal{D}_{i_1}^{(1)}(\boldsymbol{x})\bigoplus \mathcal{E}_{i_1}^{(1)}(x)=\bigoplus_{i_1=1}^{s_1}\mathcal{D}_{i_1}^{(1)}(\boldsymbol{x})
\end{equation}
 These coordinates will be denoted by
\begin{equation}
\label{eq:K1chart}
\boldsymbol{x}= (\boldsymbol{x}^1,\ldots,\boldsymbol{x}^{i_1}, \ldots,\boldsymbol{x}^{s_1})
\end{equation}
where $\boldsymbol{x}^{i_1}=(x^{i_1,1},\ldots,  x^{i_1,r_{i_1}} )$ are defined over the integral leaves
 of the eigen-distribution $\mathcal{D}_{i_1}^{(1)}$ and the remaining ones, namely
\[
(\boldsymbol{x}^1,\ldots, \boldsymbol{x}^{{i_1}-1}, \boldsymbol{x}^{{i_1}+1},\ldots,\boldsymbol{x}^{s_1})
\]
 are coordinates of the leaves, i.e.,~are constant
 $(\boldsymbol{x}^1=\boldsymbol{c}_1^1,\ldots, \boldsymbol{x}^{{i_1}-1}=\boldsymbol{c}_1^{{i_1}-1}, \boldsymbol{x}^{{i_1}+1}=\boldsymbol{c}_1^{{i_1}+1}, \ldots,\boldsymbol{x}^{s_1}=\boldsymbol{c}_1^{s_1})$
 on each leaf $D^{(1)}_{i_1}(\boldsymbol{c}_1)$ of the foliation. Here
 $\boldsymbol{c}_1:=(\boldsymbol{c}_1^1,\ldots, \boldsymbol{c}_1^{{i_1}-1}, \boldsymbol{c}_1^{{i_1}+1}, \ldots,\boldsymbol{c}_1^{s_1}) $ (the subindex $1$ refers to the operator $\boldsymbol{K}_1$).
 All operators of the family $\mathcal{K}$ commute; thus, every distribution $\mathcal{D}_{i_1}^{(1)}$ is left invariant by the operators $\{\boldsymbol{K}_2,\ldots,\boldsymbol{K}_w\}$. This implies that all the operators  $\boldsymbol{K}_\alpha\in \mathcal{K}$ in the local chart \eqref{eq:K1chart}  take a block-diagonal form
   \[
[\boldsymbol{K}_\alpha]=
\left[\begin{array}{c|c|c}
\boldsymbol{K}_1^{(\alpha)} &  &  \\
\hline  & \ddots &  \\
\hline  &  & \boldsymbol{K}_{s_1}^{(\alpha)}
\end{array}\right] \ , \qquad \alpha=1,\ldots,w
\]
\unskip\ignorespaces where $[\boldsymbol{K}^{(\alpha)}_{i_1}]_{ \vert _{D^{(1)}_{i_1}(\boldsymbol{c}_1)}}=[\boldsymbol{K}_{\alpha \vert _{D^{(1)}_{i_1}(\boldsymbol{c}_1)}}]$, $i_1=1,\ldots,s_1$, since the operators
 $\boldsymbol{K}_\alpha$ can be restricted  to each leaf $D^{(1)}_{i_1}(\boldsymbol{c}_1)$.

Besides, if any other operator of the set $\{\boldsymbol{K}_2,\ldots,\boldsymbol{K}_w\}$, say
$\boldsymbol{K}_2$, has vanishing Haantjes torsion, then the tangent space at any point $\boldsymbol{x} $ admits the finer decomposition
\begin{equation}
 \label{eq:TMinters12}
T_{\boldsymbol{x}}M= \bigoplus_{i_1,i_2}^{s_1,s_2}\mathcal{D}_{i_1}^{(1)}(\boldsymbol{x}) \cap \mathcal{D}_{i_2}^{(2)}(\boldsymbol{x}) \ ,
\end{equation}
where $\mathcal{D}_{i_2}^{(2)}$ are  the generalized eigen-distributions    of $\boldsymbol{K}_2$, which are integrable
by virtue of the Haantjes Theorem \ref{th:Haan}.
Consequently, the Haantjes Theorem can also be applied to the restriction of $\boldsymbol{K}_2$ to  $D^{(1)}_{i_1}(\boldsymbol{c}_1)$. Therefore, there exists a transformation of coordinates, acting only on the coordinates over the leaves of the foliation $D^{(1)}_{i_1}$
\begin{equation}
\Phi: M\rightarrow M, \qquad
(\boldsymbol{x}^1,\ldots,\boldsymbol{x}^{i_1}, \ldots,\boldsymbol{x}^{s_1}) \mapsto
(\boldsymbol{x}^1,\ldots,\boldsymbol{y}^{i_1}, \ldots,\boldsymbol{x}^{s_1}) \ ,
\end{equation}
such that the new coordinates $\boldsymbol{y}^{i_1}= (y^{i_1, 1},\ldots,y^{i_1 ,r_{i_1}})=\boldsymbol{f}^{i_1}(\boldsymbol{x}^{i_1})$ are
adapted to the decomposition
\begin{equation}
 \label{eq:TMinters12}
T_{\boldsymbol{x}}D^{(1)}_{i_1}(\boldsymbol{c}_1)= \bigoplus_{i_2}^{s_2}\mathcal{D}_{i_1}^{(1)}(\boldsymbol{x}) \cap \mathcal{D}_{i_2}^{(2)}(\boldsymbol{x}), \qquad \boldsymbol{x} \in
D^{(1)}_{i_1}(\boldsymbol{c}_1)\ .
\end{equation}
 Thus, we have
    \begin{equation}
\label{eq:K2diag}
 [\boldsymbol{K}^{(\alpha)}_{i_1}]=
\left[\begin{array}{c|c|c}
\boldsymbol{K}_{i_1,1}^{(\alpha)} & 0 & 0 \\
\hline 0 & \ddots & 0 \\
\hline 0 & 0 & \boldsymbol{K}_{i_1,s_2}^{(\alpha)}
\end{array}\right] \ , \qquad \alpha=1, \dots, w
\end{equation}
 where $[\boldsymbol{K}_{i_1,j}^{(\alpha)}]_
 { \vert _{D_{i_1}^{(1)}(\boldsymbol{c}_1) \cap D_{j}^{(2)}(\boldsymbol{c}_2)}} =\left[\boldsymbol{K}_
 {\alpha \vert _{D_{i_1}^{(1)}(\boldsymbol{c}_1) \cap D_{j}^{(2)}(\boldsymbol{c}_2)}} \right]$,
 $j=1,\ldots,s_2$.
  Let us consider  the decomposition
\begin{equation}
 \label{eq:TMinters12}
T_{\boldsymbol{x}}M= \bigoplus_{i_1,i_2}^{s_1,s_2}\mathcal{D}_{i_1}^{(1)}(\boldsymbol{x}) \cap \mathcal{D}_{i_2}^{(2)}(\boldsymbol{x})=
\bigoplus_{\gamma=1}^{u}\mathcal{U}_\gamma (\boldsymbol{x})\qquad \boldsymbol{x} \in M\ ,
\end{equation}
where in the direct sum \eqref{eq:TMinters12}  $\mathcal{U}_\gamma \neq\{\boldsymbol{0}\}$, $u\leq n$ and $r_\gamma$ denotes the rank of  $\mathcal{U}_\gamma$ ($\sum_{\gamma=1}^u r_\gamma=n$). Clearly, the  distributions  $\mathcal{U}_\gamma$ are invariant under the action of each operator $\boldsymbol{K}_\alpha\in\mathcal{K}$. Besides, these distributions are involutive, and are realized as the intersection of involutive distributions. By assumption, they are also \emph{mutually} integrable; therefore, there exist  local charts in $M$ of the form
\begin{equation}
\label{eq:diag12}
\{ U ,(y^{1,1},\ldots,y^{1,r_{1}}; \dots;y^{i_1, 1},\ldots,y^{i_1 ,r_{i_1}};\ldots ;y^{s_{1},1},\dots,y^{s_{1},r_{s_1}})\}
\end{equation}
adapted to the decomposition \eqref{eq:TMinters12}, where all the operators $\boldsymbol{K}_\alpha\in\mathcal{K}$  admit simultaneously a (possibly) finer block-diagonal form.
By extending the previous procedure to the Haantjes operators $\boldsymbol{K}_3, \ldots, \boldsymbol{K}_w$, we obtain the decomposition
\begin{equation}
 \label{eq:TMinters}
T_{\boldsymbol{x}}M= \bigoplus_{i_1,\ldots,i_w}^{s_1,\ldots,s_w}\mathcal{D}_{i_1}^{(1)}(\boldsymbol{x}) \bigcap  \ldots      \bigcap \mathcal{D}_{i_w}^{(w)}(\boldsymbol{x})= \bigoplus_{a=1}^{v}\mathcal{V}_a(  \boldsymbol{x})  ,
\end{equation}
where in the  direct sum \eqref{eq:TMinters} $\mathcal{V}_a \neq \{\boldsymbol{0}\}$, $v\leq n$ and $r_a$ denotes the rank of $\mathcal{V}_a$  ($\sum_{a=1}^v r_a=n$).
Then, as the involutive distributions $\mathcal{V}_a$ by assumption are \emph{mutually} integrable,  there exist  local charts
\begin{equation}
\label{eq:HchartK}
\{ U,(\boldsymbol{y}^{1},\ldots , \boldsymbol{y}^{a},\ldots,   \boldsymbol{y}^{v}) \} \ ,
\end{equation}
adapted to the decomposition \eqref{eq:TMinters}, such that
\begin{equation}
 \label{eq:Va}
\mathcal{V}_a=\left \langle  \frac{\partial}{\partial y^{a, 1}},\ldots , \frac{\partial}{\partial y^{a, r_a}} \right  \rangle  \qquad a=1,\ldots,v,
\end{equation}
where the natural frame $\big\{ \frac{\partial}{\partial y^{a, 1}},\ldots , \frac{\partial}{\partial y^{a, r_a}} \big\}$ over the leaves of $\mathcal{V}_a$ is formed by joint \emph{generalized} eigenvector fields of the operators $\{\boldsymbol{K}_1,\ldots,\boldsymbol{K}_w\}$.
\end{proof}
\begin{definition}\label{def:HchartK}
Let   $\mathcal{K}=\{\boldsymbol{K}_1,\ldots,\boldsymbol{K}_w\}$ be  a family  of   Haantjes operators. A \textit{Haantjes chart} for the family $\mathcal{K}$ is defined to be  a local chart $\{ U, (x^1,\ldots,x^n)\}$ whose associated natural frame is a generalized eigen-frame of each $\boldsymbol{K}_\alpha\in \mathcal{K}$.
\end{definition}
The assumption that the generalized eigen-distributions of the Haantjes operators in $\mathcal{K}$ are mutually integrable  is sufficient to guarantee the existence of Haantjes charts for the family $\mathcal{K}$. In the next section, we will show that in the case of  semisimple Haantjes  algebras, such hypothesis is always satisfied.

To conclude this analysis, we prove a result that will be useful for the proof of Theorem \ref{th:HaanA} of the next section.
\begin{proposition}\label{pr:HaSomme}
Let $\boldsymbol{K}_1$ and $\boldsymbol{K}_2:TM \to TM$ be two commuting   operator fields, and $\mathcal{D}_i^{(1)}$, $  \mathcal{D}_j^{(2)}$, $i\in \{1,\ldots,s_1\}$, $j\in\{1,\ldots,s_2\}$ two   involutive generalized eigen-distributions of $\boldsymbol{K}_1$ and $\boldsymbol{K}_2$, with Riesz indices $\rho_i$ and $\sigma_j$, respectively, which are assumed to be independent of $\boldsymbol{ x}$. The following three conditions are equivalent:
\begin{enumerate}
\item
the distribution $\mathcal{D}_i^{(1)} + \mathcal{D}_j^{(2)}$ is involutive;
\item
${\tau}_{(\boldsymbol{K}_1-l_i ^{(1)}\boldsymbol{I})^{\rho_i}(\boldsymbol{K}_2-l_j ^{(2)}\boldsymbol{I})^{\sigma_j}}
(\mathcal{D}_i ^{(1)} + \mathcal{D}_j^{(2)}, \mathcal{D}_i^{(1)}  + \mathcal{D}_j^{(2)})=\mathbf{0} \ ;
$
\item
$
\mathcal{H}_{(\boldsymbol{K}_1-l_i^{(1)}\boldsymbol{I})^{\rho_i}(\boldsymbol{K}_2-l_j ^{(2)}\boldsymbol{I})^{\sigma_j}}
(\mathcal{D}_i^{(1)}  + \mathcal{D}_j^{(2)}, \mathcal{D}_i ^{(1)} + \mathcal{D}_j^{(2)})=\mathbf{0}  \ .
$
\end{enumerate}

\end{proposition}
\begin{proof}
The proof proceeds as in Theorem \ref{pr:FXmn}, with the identification $\boldsymbol{A}:=  (\boldsymbol{K}_1-l_i^{(1)} \boldsymbol{I})^{\rho_i}$,
$\boldsymbol{B}:=  (\boldsymbol{K}_2-l_j ^{(2)}\boldsymbol{I})^{\sigma_j}$ and exploiting Eqs.~\eqref{eq:Riesz}, \eqref{eq:rankAB} and \eqref{eq:cor6}.
\end{proof}
The latter proposition can be easily extended to an arbitrary number of commuting operators.

\section{Haantjes algebras}

In this section, we shall introduce the notion of \textit{Haantjes algebra}, which  is essentially an associative algebra of operator fields with
vanishing Haantjes torsion.

\subsection{Main Definition}

\begin{definition}\label{def:HM}
A Haantjes algebra of rank $m$ is a pair    $(M, \mathscr{H})$ with the following properties:
\begin{itemize}
\item
$M$ is a differentiable manifold of dimension $\mathrm{n}$;
\item
$ \mathscr{H}$ is a set of Haantjes  operators $\boldsymbol{K}:TM\rightarrow TM$   that  generate
\begin{itemize}
\item
a free module of rank $\mathrm{m}$   over the ring of smooth functions on $M$:
\begin{equation}
\label{eq:Hmod}
\mathcal{H}_{\bigl( f\boldsymbol{K}_{1} +
                             g\boldsymbol{K}_2\bigr)}(X,Y)= \mathbf{0}
 \ , \qquad\forall\, X, Y \in TM \ , \quad \, f,g \in C^\infty(M)\  ,\quad \forall ~\boldsymbol{K}_1,\boldsymbol{K}_2 \in  \mathscr{H};
\end{equation}
  \item
a ring  w.r.t. the composition operation
\begin{equation}
 \label{eq:Hring}
\mathcal{H}_{\bigl(\boldsymbol{K}_1 \, \boldsymbol{K}_2\bigr)}(X,Y)=
 \mathcal{H}_{\bigl(\boldsymbol{K}_2 \, \boldsymbol{K}_1\bigr)}(X,Y)=\mathbf{0} \ , \qquad
\forall\, \boldsymbol{K}_1,\boldsymbol{K}_2\in  \mathscr{H} , \quad\forall\, X, Y \in TM\ .
\end{equation}
\end{itemize}
\end{itemize}
In addition, if
\begin{equation}
\boldsymbol{K}_1\,\boldsymbol{K}_2=\boldsymbol{K}_2\,\boldsymbol{K}_1 \ , \quad\qquad\ \boldsymbol{K}_1,\boldsymbol{K}_2 \in  \mathscr{H}\ ,
\end{equation}
the  algebra $(M, \mathscr{H})$ will be said to be an Abelian Haantjes algebra. Moreover, if   the identity operator $\boldsymbol{I}\in \mathscr{H}$, then $(M, \mathscr{H})$ will be said to be a Haantjes algebra with identity.
\end{definition}
The assumptions \eqref{eq:Hmod}, \eqref{eq:Hring} ensure that the set $\mathscr{H}$ generates an associative algebra of Haantjes operators. Besides, whenever $\boldsymbol{K} \in \mathscr{H}$, then the powers $\boldsymbol{K}^i \in \mathscr{H}  ~ \forall\, i\in \mathbb{N} \backslash \{0\}$.

Let us consider the minimal polynomial of an operator  $\boldsymbol{K} \in \mathscr{H}$
\begin{equation}
\label{eq:mpK}
 m_{\boldsymbol{K}}(\boldsymbol{x},\lambda)=\prod_{i=1}^s \bigl(\lambda-l_i(\boldsymbol{x})\bigr)^{\rho_i}=\lambda^r+\sum_{j=1} ^{r} c_{j}(\boldsymbol{x}) \lambda^{r-j }\ ,\qquad r=\sum_{i=1}^s\rho_i \ ,
\end{equation}
 where $l_i (\boldsymbol{x})$, $i=1,\ldots,s$ are the pointwise distinct eigenvalues of  $\boldsymbol{K}$. The previous observations imply the following results.
\begin{lemma}\label{pr:Hg}
Let $(M, \mathscr{H})$ be a Haantjes algebra of rank $\mathrm{m}$. If $\boldsymbol{I}\in \mathscr{H}$, then the degree $\mathrm{r}$ of the minimal polynomial of each $\boldsymbol{K} \in \mathscr{H}$ is not greater than $\mathrm{m}$;  if  $\boldsymbol{I}\notin \mathscr{H}$, then $\mathrm{r}\leq (\mathrm{m}+1)$.
\end{lemma}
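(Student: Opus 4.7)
The plan is to exploit the definition of the minimal polynomial together with the algebraic closure properties of $\mathscr{H}$, and then compare linearly independent subsets with the rank $\mathrm{m}$ of the free module. The key algebraic fact is that, by definition of $m_{\boldsymbol{K}}(\boldsymbol{x},\lambda)$ as the lowest-degree monic polynomial annihilating $\boldsymbol{K}$, the operators $\boldsymbol{I}, \boldsymbol{K}, \boldsymbol{K}^2, \ldots, \boldsymbol{K}^{r-1}$ are $C^\infty(M)$-linearly independent: any nontrivial relation among them would produce a polynomial of degree strictly less than $r$ annihilating $\boldsymbol{K}$, contradicting minimality.

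First I would invoke the ring condition \eqref{eq:Hring} to conclude that $\boldsymbol{K}^i \in \mathscr{H}$ for every integer $i \geq 1$, since $\mathscr{H}$ is closed under composition. In the case $\boldsymbol{I}\in \mathscr{H}$, the set $\{\boldsymbol{I}, \boldsymbol{K}, \boldsymbol{K}^2,\ldots, \boldsymbol{K}^{r-1}\}$ then provides $r$ elements of $\mathscr{H}$ which are $C^\infty(M)$-linearly independent by the remark above. Because $\mathscr{H}$ is free of rank $\mathrm{m}$, such a family cannot have more than $\mathrm{m}$ elements, whence $r\leq \mathrm{m}$.

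In the case $\boldsymbol{I}\notin \mathscr{H}$, the identity is no longer available to enlarge the independent family, but the $r-1$ operators $\{\boldsymbol{K},\boldsymbol{K}^2,\ldots,\boldsymbol{K}^{r-1}\}$ still lie in $\mathscr{H}$ and remain $C^\infty(M)$-linearly independent, as a subset of a linearly independent set. The rank condition thus yields $r-1\leq \mathrm{m}$, i.e.\ $r\leq \mathrm{m}+1$. One may additionally remark, as a consistency check on the gap of $1$ between the two cases, that from \eqref{eq:mpK} we have $\boldsymbol{K}^r=-c_r\boldsymbol{I}-\sum_{j=1}^{r-1}c_j\boldsymbol{K}^{r-j}$; if the coefficient $c_r$ were nowhere vanishing on an open set, this relation would express $\boldsymbol{I}$ as a $C^\infty$-linear combination of elements of $\mathscr{H}$, contradicting the assumption $\boldsymbol{I}\notin\mathscr{H}$, so necessarily $c_r\equiv 0$ and $\boldsymbol{K}^r$ itself lies in the span of $\boldsymbol{K},\ldots,\boldsymbol{K}^{r-1}$.

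The argument is essentially linear-algebraic and I do not expect any serious obstacle; the only point requiring care is to keep the distinction between $C^\infty(M)$-linear independence (which controls the free-module rank) and mere pointwise linear independence of the operator values, and to make sure the closure properties \eqref{eq:Hmod}--\eqref{eq:Hring} of the algebra are invoked to guarantee that all powers $\boldsymbol{K}^i$, not just the generators, belong to $\mathscr{H}$.
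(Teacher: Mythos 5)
Your argument is correct and coincides with the one the paper intends: the paper derives the lemma directly from the observation that all powers $\boldsymbol{K}^i$ ($i\geq 1$) lie in $\mathscr{H}$ and that $\boldsymbol{I},\boldsymbol{K},\ldots,\boldsymbol{K}^{r-1}$ are $C^\infty(M)$-linearly independent by minimality, so the free-module rank bounds their number exactly as you write. Your closing remark that $c_r\equiv 0$ when $\boldsymbol{I}\notin\mathscr{H}$ is likewise consistent with the paper's proof of Lemma \ref{lm:HnotI}.
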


\begin{lemma}\label{lm:HnotI}
Let $(M, \mathscr{H})$ be a Haantjes algebra without identity. Then, each  $\boldsymbol{K} \in \mathscr{H}$ is a non-invertible operator.
\end{lemma}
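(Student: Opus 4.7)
The plan is a proof by contradiction: I would suppose that some $\boldsymbol{K} \in \mathscr{H}$ is invertible and show that this forces $\boldsymbol{I} \in \mathscr{H}$, contradicting the hypothesis. The key tool is the minimal polynomial identity \eqref{eq:mpK} combined with the closure properties of a Haantjes algebra.

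First, assume $\boldsymbol{K} \in \mathscr{H}$ is invertible at every point of $M$. By Lemma \ref{pr:Hg}, the degree $r$ of the minimal polynomial of $\boldsymbol{K}$ is finite (bounded by $m+1$), so Cayley--Hamilton applied to the minimal polynomial yields
\begin{equation*}
\boldsymbol{K}^{r}+\sum_{j=1}^{r-1} c_j(\boldsymbol{x})\,\boldsymbol{K}^{r-j}+c_r(\boldsymbol{x})\,\boldsymbol{I}=0,
\end{equation*}
where, from the factorization in \eqref{eq:mpK}, the constant term is $c_r(\boldsymbol{x})=\prod_{i=1}^{s}(-l_i(\boldsymbol{x}))^{\rho_i}$. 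Invertibility of $\boldsymbol{K}$ at every $\boldsymbol{x}\in M$ means $0\notin \mathrm{Spec}(\boldsymbol{K})$, hence $c_r(\boldsymbol{x})\neq 0$ everywhere, so $1/c_r\in C^\infty(M)$.

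Next, I would solve for the identity operator, obtaining
\begin{equation*}
\boldsymbol{I}= -\frac{1}{c_r(\boldsymbol{x})}\,\boldsymbol{K}^{r}-\sum_{j=1}^{r-1}\frac{c_j(\boldsymbol{x})}{c_r(\boldsymbol{x})}\,\boldsymbol{K}^{r-j}.
\end{equation*}
By the closure conditions \eqref{eq:Hmod} and \eqref{eq:Hring} of a Haantjes algebra, every power $\boldsymbol{K}^{r-j}$ lies in $\mathscr{H}$ (as was observed right before Lemma \ref{pr:Hg}), and $\mathscr{H}$ is closed under multiplication by smooth functions. Hence the right-hand side belongs to $\mathscr{H}$, so $\boldsymbol{I}\in\mathscr{H}$, contradicting the hypothesis that $\mathscr{H}$ is without identity.

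The essentially delicate point, and what I would want to make explicit, is that $c_r$ is a globally smooth nowhere-vanishing function so that $1/c_r$ is admissible as a coefficient in the module structure; this is exactly where global invertibility of $\boldsymbol{K}$ is used, rather than merely pointwise invertibility on a dense subset. Modulo this verification, which is immediate from the factorization of the minimal polynomial, the rest of the argument is an application of the algebraic closure axioms of Definition \ref{def:HM}.
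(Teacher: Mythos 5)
Your proof is correct and follows essentially the same route as the paper: both invoke the minimal polynomial \eqref{eq:mpK}, identify the constant term $c_r$ with $(-1)^r\prod_{i=1}^{s} l_i^{\rho_i}$, and use the closure axioms of Definition \ref{def:HM} to conclude that $\boldsymbol{I}$ would be forced into $\mathscr{H}$. The only difference is that the paper runs the contradiction pointwise (wherever $c_r(\boldsymbol{x})\neq 0$) and thereby gets the slightly stronger conclusion that $c_r$, hence $\det\boldsymbol{K}$, vanishes at \emph{every} point of $M$, whereas your global formulation directly rules out only the invertibility of $\boldsymbol{K}$ on all of $M$ at once.
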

\begin{proof}
Let $m_{\boldsymbol{K}}(\boldsymbol{x}, \lambda)$  be the minimal polynomial of $\boldsymbol{K}$, as in Eq.~\eqref{eq:mpK}. Let us consider the operator $c_r(\boldsymbol{x}) \boldsymbol{I}=-( \boldsymbol{K}^r+\sum_{j=1} ^{r-1} c_{j}(\boldsymbol{x}) \boldsymbol{K}^{r-j } )$.  At the points  $\boldsymbol{x}\in M$ where $c_r(\boldsymbol{x})\neq 0$,  this operator belongs to $\mathscr{H}$. Then, on this set of points, the identity operator $\boldsymbol{I}$ should also belong to $\mathscr{H}$, which  is absurd. Therefore $c_r(\boldsymbol{x})=(-1)^r\ \Pi_{i=1}^sl_i ^{\rho_i}(\boldsymbol{x}) $  vanishes $\forall \,\boldsymbol{x}\in M$. Consequently,  $\boldsymbol{K}$ is not invertible at any point of $M$.
\end{proof}
\begin{remark}
The class of  $\omega\mathcal{H}$ manifolds introduced and discussed in~\cite{TT2016prepr} can be regarded as a family of symplectic manifolds of dimension $2n$, endowed with an Abelian
Haantjes algebra of rank $m=n$ that fulfills an additional compatibility condition with the symplectic form $\omega$.
\end{remark}
The requirements of Definition \ref{def:HM} are apparently demanding and difficult to fulfill in concrete examples. However, a natural Haantjes algebra  is realized, in a local chart
$\{U, \boldsymbol{x}=(x^1,\ldots,x^n)\}$, by  any set of diagonal operators of the form
\begin{equation}
\label{eq:Hdiag}
\boldsymbol{K}=\sum _{k=1}^n l_{k }(\boldsymbol{x})
\frac{\partial}{\partial x^k}\otimes \rd x^k  \ ,
\end{equation}
 where the smooth functions $ l_{k }(\boldsymbol{x})$ play the role of eigenvalue fields of $\boldsymbol{K}$.
 The   operators of the form \eqref{eq:Hdiag} have  vanishing  Haantjes torsion and satisfy the differential compatibility condition \eqref{eq:Hmod}, by virtue of Proposition \ref{pr:Hd}.  Moreover, they form
a commutative ring, as they also satisfy the relations \eqref{eq:Hring}.
 In fact, such operators generate an algebraic structure that will be said to be a \emph{diagonal} Haantjes algebra.
In the following section, we will show that this interesting structure represents the prototype of Haantjes algebras of commuting \emph{semisimple} Haantjes operators. This observation suggests the following

\begin{definition}\label{def:Hdiag}
A  Haantjes algebra $(M, \mathscr{H})$ is said to be semisimple if each operator $\boldsymbol{K} \in \mathscr{H}$ is semisimple.
\end{definition}
We prove now a general result concerning the intersections of  the kernels of  semisimple operator fields, which will be crucial  in the proof of Theorem \ref{th:HaanA}.
\begin{lemma}\label{lm:kerIntersections}
Let $\{\boldsymbol{L}_1,\ldots,\boldsymbol{L}_w\}$, $\boldsymbol{L}_\alpha: TM \to TM$, $\alpha=1,\ldots,w$ be semisimple (non-invertible) commuting operator fields. Then, there exist $w$ functions $(c_1,\ldots,c_w)$ such that
\begin{equation}
 \label{eq:kerIntersections}
\bigcap _{i=1}^w \ker \boldsymbol{L}_i=\ker \left(\sum_{i=1}^w c_i \boldsymbol{L}_i\right )
\end{equation}
\end{lemma}
\begin{proof}
 The operators $\{\boldsymbol{L}_1,\ldots,\boldsymbol{L}_w\}$ are semisimple and commute. Therefore there exists a local reference frame where they
are, simultaneously, pointwise diagonalizable.
Without loss of generality, we can suppose
that $\{\boldsymbol{L}_2,\ldots,\boldsymbol{L}_s\}$, with $s\le w$, have kernels whose intersections with $\ker \boldsymbol{L}_1$ have strictly decreasing rank. Then, we can choose a local reference frame of common eigenvector fields
$ \{X_1,\ldots,X_n\}$ in such a way that the eigenvalue fields of $\{\boldsymbol{L}_1,\ldots,\boldsymbol{L}_s\}$, at each $ \boldsymbol{x}\in M$ fulfill the conditions
\begin{eqnarray*}
l_j^{(1)}( \boldsymbol{x})&\neq &0 \ , \qquad j=1,\ldots, r_1 \ ,\\
l_j^{(2)}( \boldsymbol{x})&\neq &0 \ , \qquad j=(r_1+1),\ldots, r_2 \ , \\
\ldots &\neq &0\ldots \ ,\\
l_j^{(s)}( \boldsymbol{x})&\neq &0 \ , \qquad j=(r_{s-1}+1),\ldots, r_s \ ,
\end{eqnarray*}
where $r_1=\rank (\boldsymbol{L}_1), \ldots , r_i=  \rank(\boldsymbol{L}_i)_{\vert \ker\boldsymbol{L}_{i-1}}$  for $i=2,\ldots,s$. Then, we can choose $w$ functions $c_1,\ldots,c_w$ such that $\forall \, \boldsymbol{x}\in M$ :
\begin{eqnarray}
c_j( \boldsymbol{x})&=&0 \ ,\qquad j=s+1,\ldots,w \ , \\
c_s( \boldsymbol{x})&\neq& 0 \ ,\\
c_{s-1}( \boldsymbol{x})&\neq&- \frac{1}{l_j^{(s-1)}( \boldsymbol{x})}c_s( \boldsymbol{x}) l_j^{(s)}(\boldsymbol{x}) \ ,\qquad  j=(r_{s-2}+1),\ldots,r_{s-1} \ , \\
\ldots& &\ldots\\
c_1( \boldsymbol{x})&\neq&- \frac{1}{l_j^{(1)}( \boldsymbol{x})}\sum_{k=2}^s c_k ( \boldsymbol{x})l_j^{(k)}(\boldsymbol{x}) \ ,\qquad j=1,\ldots,r_1\ .
\end{eqnarray}
 With such choices, the linear combination $\sum_{i=1}^w c_i \boldsymbol{L}_i$ satisfies the property \eqref{eq:kerIntersections}.
\end{proof}
We are now able to prove a relevant result concerning the integrability of the eigen-distributions of a set of Haantjes operators forming a semisimple, Abelian Haantjes algebra.
\begin{theorem}\label{th:HaanA}
Let $(M, \mathscr{H})$ be a semisimple Abelian Haantjes algebra, and assume that the rank of the  eigen-distributions of the operators belonging to $\mathscr{H}$ is independent of $\boldsymbol{x}\in M$.
Then, these   eigen-distributions, jointly with all of their possible intersections, are mutually integrable.
\end{theorem}
\begin{proof}
Let $\{\boldsymbol{K}_1,\ldots,\boldsymbol{K}_m\}$ be a basis of $(M, \mathscr{H})$, and
$ (\mathcal{D}_{i_1}^{(1)},  \ldots    ,\mathcal{D}_{i_m}^{(m)})$, $i_1=1,\ldots, s_1$, $i_m=1,\ldots,s_m$, be the set of their (proper) eigen-distributions. Let
\begin{equation}
\label{eq:1ints}
 \mathcal{V}_a=\mathcal{D}_{i_1}^{(1)}(\boldsymbol{x}) \bigcap  \ldots      \bigcap \mathcal{D}_{i_m}^{(m)}(\boldsymbol{x})\qquad a=1,\ldots, v, \qquad v\leq n
\end{equation}
denote a nontrivial intersection of eigen-distributions of the operators $\{\boldsymbol{K}_1,\ldots,\boldsymbol{K}_m\}$. Consequently,  this distribution,  being the  intersection  of distributions which are involutive  (due to the Haantjes Theorem), is also involutive.  Moreover, since  $\mathcal{V}_a$ is   an intersection of the kernels of the operators $(\boldsymbol{K}_j-l_{i_j} ^{(j)}\boldsymbol{I}) , j=1,\ldots,m$, using Lemma \ref{lm:kerIntersections} it can be expressed as the kernel of a single Haantjes operator,  obtained as a  linear combination of $\{\boldsymbol{K}_1,\ldots,\boldsymbol{K}_m\}$. Then, the direct  sums of the distributions \eqref{eq:1ints}  fulfill the condition $(3)$ of Proposition \ref{pr:HaSomme}, as  $(M, \mathscr{H})$ is a  Haantjes    algebra. Therefore they  are mutually integrable.
\end{proof}
\begin{theorem}\label{TH}
Let $\mathcal{K}$ be  a family of semisimple commuting operator fields.  A set of local coordinates exists such that all the operators of $\mathcal{K}$ can be simultaneously diagonalized if and only if these operators generate an Abelian Haantjes algebra.
\end{theorem}
\label{th:DiagAll}
\begin{proof}
Assume that the operators of $\mathcal{K}$ generate an Abelian Haantjes algebra $(M, \mathscr{H})$. The result  follows from Theorem \ref{th:HaanA} and Proposition \ref{th:HJ}  by simply observing that the distributions
\eqref{eq:1ints}, by construction, are formed by proper, joint eigenvector fields of a  given  basis $\{\boldsymbol{K}_1,\ldots,\boldsymbol{K}_m\}$ of $\mathscr{H}$.
Therefore,  the natural frame $\big\{ \frac{\partial}{\partial y^{a, 1}},\ldots , \frac{\partial}{\partial y^{a, r_a}}\big\}$ over the leaves of $\mathcal{V}_a$, associated to the chart \eqref{eq:HchartK}, is formed by joint eigenvector fields of the operators  $\{\boldsymbol{K}_1,\ldots,\boldsymbol{K}_m\}$.

The converse statement can be easily proved by observing that if $\mathcal{K}$ is a family of  operators admitting local coordinates in which they all take a diagonal form,  they are necessarily  Haantjes operators and generate a \emph{diagonal} Haantjes algebra.
\end{proof}
In Section \ref{sec:cHa}, we shall present the large class of \textit{cyclic} Haantjes algebras, which are generated by the powers of a single Haantjes operator.
Examples of non-diagonal Haantjes algebras have been studied in~\cite{TT2019prepr}.

\subsection{Cyclic Haantjes algebras}
  \label{sec:cHa}
An especially relevant class of Haantjes algebras  is represented by the algebras generated by  a \textit{single} Haantjes operator   $\boldsymbol{L}:TM\mapsto TM$. In fact,  one can
consider the   algebra $\mathcal{L}$ of all powers of $\boldsymbol{L}$
\[  \label{algebraic}
\mathcal{L}(\boldsymbol{L}) := \left \langle  \boldsymbol{I} , \boldsymbol{L},\boldsymbol{L}^2, \ldots, \boldsymbol{L}^{n-1}, \ldots\right  \rangle  \ .
\]
 Due to Proposition \ref{pr:Lpowers}, the algebra $\mathcal{L}(\boldsymbol{L})$ is an example of a Haantjes algebra; we shall call it  \emph{cyclic} Haantjes algebra. Its rank is equal to the degree of the minimal polynomial of $\boldsymbol{L}$, which  is not greater than $n$, due to the Cayley--Hamilton theorem.

A natural question is to establish whether  a given Haantjes algebra  can be generated by a single Haantjes  operator. To this aim, we introduce the notion of cyclic generator of $\mathscr{H}$.
\begin{definition}\label{def:CHa}
Let  $(M,  \mathscr{H})$ be  a  rank $\mathrm{m}$ Abelian Haantjes algebra. An  operator $\boldsymbol{L}:TM \to TM$ whose minimal polynomial is of degree $\mathrm{h}\geq \mathrm{m}$  will be said to be  a cyclic generator of  $\mathscr{H}$  if
\[
 \mathscr{H}\subseteq\mathcal{L}(\boldsymbol{L}) \ .
\]
Let
\begin{equation}
 \label{eq:baseCicl}
 \mathcal{B}_{cyc}=\{ \boldsymbol{I} , \boldsymbol{L},\boldsymbol{L}^2, \ldots ,\boldsymbol{L}^{h-1} \} \ ,
\end{equation}
be   a cyclic basis of $\mathcal{L}(\boldsymbol{L})$.  A basis $\mathcal{B}$ of  $ \mathscr{H}$ such that $\mathcal{B}\subseteq \mathcal{B}_{cyc}$ will be said to be a cyclic basis of $\mathscr{H}$.
\end{definition}
A cyclic basis allows us to represent
any Haantjes operator $\boldsymbol{K}\in \mathscr{H} $  as a polynomial field in $\boldsymbol{L}$ of degree at most $(h-1)$, i.e.~\begin{equation}
\label{eq:Hg}
\boldsymbol{K}=p_{\boldsymbol{K} }(\boldsymbol{x},\boldsymbol{L})=\sum_{k =0} ^{h-1} a_k(\boldsymbol{x})\,\boldsymbol{L}^k \ ,
\end{equation}
where $a_k(\boldsymbol{x})\in C^{\infty}(M)$.
We propose now a characterization of semisimple operators, with real eigenvalues, belonging to a cyclic  algebra.
\begin{proposition}\label{pr:CGK}
Let  $\boldsymbol{L}: TM \to TM$ be a semisimple
operator with $h$ pointwise distinct eigenvalues $\{\lambda_{1}(\boldsymbol{x}), \ldots , \lambda_{h}(\boldsymbol{x})\}$. Let $\boldsymbol{K}: TM \to TM$ be another operator field possessing $s$ pointwise distinct eigenvalues, with $\text{s} \leq \text{h}$. The following conditions are equivalent:
\begin{itemize}
\item
 $\boldsymbol{K}$ belongs to the cyclic algebra of rank $\mathrm{h}$ generated by $\boldsymbol{L}$, i.e.~\begin{equation}
 \label{eq:KinL}
 \boldsymbol{K}\in \mathcal{L}(\boldsymbol{L}) \ ;
\end{equation}
\item
 there exists a polynomial field  $p_{\boldsymbol{K} }(\boldsymbol{x},\lambda)$  in $\lambda$ of degree at most $(h-1)$, such that
\begin{equation}
 \label{eq:KpL}
 \boldsymbol{K}=p_{\boldsymbol{K} }(\boldsymbol{x},\boldsymbol{L}) \ ;
\end{equation}
\item
each eigen-distribution of  $\boldsymbol{L}$ is included in a single eigen-distribution of  $\boldsymbol{K}$:
\begin{equation}
 \label{eq:cPayne}
\mathcal{C}_{\lambda_i} :=  \ker(\boldsymbol{L}-\lambda_i \boldsymbol{I})
\subseteq\mathcal{D}_{l_i} :=  \ker(\boldsymbol{K}-l_i \boldsymbol{I}),
\end{equation}
where it is understood that the eigenvalues $\Bigl(l_1(\boldsymbol{x}),\ldots,l_h(\boldsymbol{x})\Bigr)$ of $\boldsymbol{K}$ may not be all distinct.
\end{itemize}
\end{proposition}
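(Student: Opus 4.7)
My plan is to prove the three equivalences via the chain $(1)\Leftrightarrow(2)\Leftrightarrow(3)$. The first equivalence $(1)\Leftrightarrow(2)$ is a direct consequence of the minimal polynomial \eqref{eq:mpK}: since $\boldsymbol{L}$ is semisimple with $m$ pointwise distinct eigenvalues, its minimal polynomial has degree $m$, and the identity $m_{\boldsymbol{L}}(\boldsymbol{x},\boldsymbol{L})=0$ allows one to reduce, iteratively, every $\boldsymbol{L}^k$ with $k\geq m$ to a $C^\infty(M)$-linear combination of $\boldsymbol{I},\boldsymbol{L},\ldots,\boldsymbol{L}^{m-1}$. Hence $\mathcal{L}(\boldsymbol{L})$ coincides with the module of polynomial fields in $\boldsymbol{L}$ of degree at most $m-1$.

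The direction $(2)\Rightarrow(3)$ is immediate functional calculus: if $v\in\mathcal{C}_{\lambda_i}$, then $\boldsymbol{K}v=p_{\boldsymbol{K}}(\boldsymbol{x},\boldsymbol{L})v=p_{\boldsymbol{K}}(\boldsymbol{x},\lambda_i(\boldsymbol{x}))v$, so $\mathcal{C}_{\lambda_i}$ is contained in the eigen-distribution $\mathcal{D}_{l_i}$ of $\boldsymbol{K}$ with $l_i(\boldsymbol{x})=p_{\boldsymbol{K}}(\boldsymbol{x},\lambda_i(\boldsymbol{x}))$; these $l_i$ need not all be distinct, consistently with the assumption $s\leq m$.

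The central step is $(3)\Rightarrow(2)$, which I would handle via Lagrange interpolation over the pointwise distinct nodes $\lambda_1(\boldsymbol{x}),\ldots,\lambda_m(\boldsymbol{x})$, setting
\begin{equation*}
p(\boldsymbol{x},\lambda):=\sum_{i=1}^{m} l_i(\boldsymbol{x})\prod_{j\neq i}\frac{\lambda-\lambda_j(\boldsymbol{x})}{\lambda_i(\boldsymbol{x})-\lambda_j(\boldsymbol{x})}.
\end{equation*}
The denominators never vanish by the pointwise-distinctness hypothesis, so the coefficients of $p$ are smooth on $M$, the degree in $\lambda$ is at most $m-1$, and $p(\boldsymbol{x},\lambda_i(\boldsymbol{x}))=l_i(\boldsymbol{x})$. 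On each $\mathcal{C}_{\lambda_i}$ the operator $p(\boldsymbol{x},\boldsymbol{L})$ therefore acts as $l_i(\boldsymbol{x})\,\mathrm{id}$, which by the hypothesis $\mathcal{C}_{\lambda_i}\subseteq\mathcal{D}_{l_i}$ is precisely the action of $\boldsymbol{K}$. Since $\boldsymbol{L}$ is semisimple, $T_{\boldsymbol{x}}M=\bigoplus_{i=1}^{m}\mathcal{C}_{\lambda_i}(\boldsymbol{x})$, so $\boldsymbol{K}$ and $p(\boldsymbol{x},\boldsymbol{L})$ agree on the whole tangent bundle, yielding (2). The main technical point to verify is the smoothness on $M$ of the eigenvalue functions $\lambda_i$ and $l_i$ entering the interpolation formula: for $\lambda_i$ it follows from the implicit function theorem applied to the characteristic polynomial of $\boldsymbol{L}$ (using pointwise distinctness), and for $l_i$ it is inherited from the smoothness of $\boldsymbol{K}$ acting on a smooth local section of $\mathcal{C}_{\lambda_i}$.
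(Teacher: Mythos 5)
Your argument is correct and follows essentially the same route as the paper's: the equivalence $(1)\Leftrightarrow(2)$ via the degree-$m$ minimal polynomial of the semisimple operator $\boldsymbol{L}$, the implication $(2)\Rightarrow(3)$ by evaluating $p_{\boldsymbol{K}}(\boldsymbol{x},\boldsymbol{L})$ on eigenvector fields, and $(3)\Rightarrow(2)$ by Lagrange interpolation at the pointwise distinct nodes $\lambda_i(\boldsymbol{x})$ combined with the spectral decomposition $T_{\boldsymbol{x}}M=\bigoplus_i\mathcal{C}_{\lambda_i}(\boldsymbol{x})$. Your added remark on the smoothness of the eigenvalue fields is a point the paper leaves implicit, but it does not change the substance of the proof.
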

\begin{proof}
The equivalence between conditions  \eqref{eq:KinL} and \eqref{eq:KpL} holds since the minimal polynomial of $\boldsymbol{L}$ is given by
\[
m_{\boldsymbol{L}}(\boldsymbol{x},\lambda)=\Pi_{i=1}^h \bigl(\lambda-\lambda_i(\boldsymbol{x})\bigr) \ ,
\]
and the set \eqref{eq:baseCicl} is a basis of $\mathcal{L}(\boldsymbol{L})$. Besides, condition \eqref{eq:KpL} implies \eqref{eq:cPayne} as  every eigenvector field $X$ of $\boldsymbol{L}$ belonging to $\mathcal{C}_{\lambda_i}$ is also an eigenvector field of $ \boldsymbol{K}$, with eigenvalue $l_i( \boldsymbol{x})=p_{\boldsymbol{K} }(\boldsymbol{x},\lambda_i)$:
\[
 \boldsymbol{K} X=p_{\boldsymbol{K} }(\boldsymbol{x},\boldsymbol{L}) X=p_{\boldsymbol{K} }(\boldsymbol{x},\lambda_i)X \ .
\]

Vice versa, assuming that condition \eqref{eq:cPayne} is satisfied, it suffices to show that there exists   a polynomial  $p_{\boldsymbol{K}}(\boldsymbol{x},\lambda)$, $\lambda\in \mathbb{R}$, such that $\boldsymbol{K}$ and $p_{\boldsymbol{K}}(\boldsymbol{x},\boldsymbol{L})$ agree on a basis adapted to the decomposition
\[
T_{\boldsymbol{x}}M= \bigoplus_{i=1}^{h}\mathcal{C}_{\lambda_i}(\boldsymbol{x}) \ .
\]
To this aim, we must solve the following system
\begin{equation}
\label{eq:LagPoly}
l_{i }(\boldsymbol{x})=p_{\boldsymbol{K}} (\boldsymbol{x},\lambda_i)=\sum_{k =0} ^{h-1} a_{k}(\boldsymbol{x})\lambda_{{i }}^k(\boldsymbol{x}) \ ,
\qquad\  \quad i=1,\ldots,h \ .
\end{equation}
This  can be done by means of  the $h$ Lagrange interpolation polynomials of degree $(h-1)$
\[
\pi_i(\lambda)= \frac{\Pi_{j\neq i}^h(\lambda-\lambda_j )}{\Pi_{j\neq i}^h(\lambda_i-\lambda_j )} \ ,\qquad\qquad  i=1,\ldots,h
\]
which yield the expressions
\[
p_{\boldsymbol{K}}(\boldsymbol{x},\lambda)=\sum_{i=1}^h l_{i }(\boldsymbol{x})\,\pi_i(\lambda) \ .
\]
Therefore, we obtain the relation
\begin{equation}
 \label{eq:KpK}
\boldsymbol{K}=\sum _{i=1}^h p_{\boldsymbol{K}}(\boldsymbol{x},\lambda_{i } ) \, \pi_i(\boldsymbol{L})
\end{equation}
where $ \pi_i(\boldsymbol{L})$ are pointwise projection operator fields onto the distribution $\mathcal{C}_{\lambda_i}$. The basis
$\mathcal{B}_{int}=\{\pi_1(\boldsymbol{L}),\ldots ,\pi_h(\boldsymbol{L})\}$ will be said to be a Lagrange interpolation basis of $\mathcal{L}(\boldsymbol{L})$.
\end{proof}
\begin{corollary}
If one of the equivalent conditions of Proposition \ref{pr:CGK} is satisfied and $\boldsymbol{L}$ is a Haantjes operator, then the  operator $\boldsymbol{K}$   is also a Haantjes operator which commutes with $\boldsymbol{L}$. In addition, every integrable eigen-frame of $\boldsymbol{L}$ is also an integrable eigen-frame for $\boldsymbol{K}$. Therefore we have
\begin{equation}
 \label{eq:LKred}
\boldsymbol{L}_{\vert _{C_j(\boldsymbol{x})}}= \lambda_j \boldsymbol{I}_{\vert _{C_j(\boldsymbol{x})}}\ ,\quad \qquad
\boldsymbol{K}_{ \vert _{C_j(\boldsymbol{x})}}=
p_{\boldsymbol{K}} (\boldsymbol{x},\lambda_j)\boldsymbol{I}_{ \vert _{C_j(\boldsymbol{x})}} \ ,
\end{equation}
where $C_j(\boldsymbol{x})$ denotes an integral leaf, passing through $\boldsymbol{x}$, of the eigen-distribution $\mathcal{C}_{\lambda_j}$ of $\boldsymbol{L}$.
\end{corollary}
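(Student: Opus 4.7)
The plan is to assemble the statement directly from the two preceding results, Propositions \ref{pr:CGK} and \ref{pr:Lpowers}, with essentially no additional computation. By hypothesis one of the equivalent conditions of Proposition \ref{pr:CGK} holds, so in particular condition \eqref{eq:KpL} provides the polynomial representation
\[
\boldsymbol{K}=p_{\boldsymbol{K}}(\boldsymbol{x},\boldsymbol{L})=\sum_{k=0}^{m-1}a_{k}(\boldsymbol{x})\,\boldsymbol{L}^{k},
\]
with coefficients $a_{k}\in C^{\infty}(M)$. This representation is the central object from which every claim of the corollary will be read off.

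First, the fact that $\boldsymbol{K}$ is a Haantjes operator follows at once from Proposition \ref{pr:Lpowers}, which asserts that any polynomial in a Haantjes operator with smooth function coefficients has vanishing Haantjes torsion. Next, the commutation relation $\boldsymbol{K}\boldsymbol{L}=\boldsymbol{L}\boldsymbol{K}$ is a pointwise linear-algebraic triviality: at each $\boldsymbol{x}\in M$ the coefficients $a_{k}(\boldsymbol{x})$ are scalars, so $\boldsymbol{K}(\boldsymbol{x})$ is a polynomial in the single endomorphism $\boldsymbol{L}(\boldsymbol{x})$, which obviously commutes with $\boldsymbol{L}(\boldsymbol{x})$.

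For the eigen-basis statement, let $X$ be any eigenvector field of $\boldsymbol{L}$ belonging to $\mathcal{C}_{\lambda_{j}}$, so $\boldsymbol{L}X=\lambda_{j}(\boldsymbol{x})X$. Applying the polynomial representation yields
\[
\boldsymbol{K}X=p_{\boldsymbol{K}}(\boldsymbol{x},\boldsymbol{L})X=p_{\boldsymbol{K}}(\boldsymbol{x},\lambda_{j}(\boldsymbol{x}))\,X,
\]
which shows both that $X$ is an eigenvector of $\boldsymbol{K}$ and that $\mathcal{C}_{\lambda_{j}}\subseteq\ker(\boldsymbol{K}-p_{\boldsymbol{K}}(\boldsymbol{x},\lambda_{j})\boldsymbol{I})$. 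Hence any eigenframe of $\boldsymbol{L}$ is automatically an eigenframe of $\boldsymbol{K}$, with eigenvalues obtained by evaluating the polynomial on the eigenvalues of $\boldsymbol{L}$. The restriction identities \eqref{eq:LKred} are now immediate: on an integral leaf $C_{j}$ of $\mathcal{C}_{\lambda_{j}}$, the operator $\boldsymbol{L}$ reduces to the scalar $\lambda_{j}$ by definition, and therefore $\boldsymbol{K}$ reduces to $p_{\boldsymbol{K}}(\boldsymbol{x},\lambda_{j})$.

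There is no substantive obstacle here, because all the heavy lifting has been done in Propositions \ref{pr:Lpowers} and \ref{pr:CGK}; the only point worth checking is that semisimplicity of $\boldsymbol{L}$ is used solely to guarantee the existence of the eigen-basis invoked in the third step, and is not needed again in the argument.
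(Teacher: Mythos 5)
Your proposal is correct and follows exactly the route the paper intends: the corollary is left without explicit proof precisely because everything is read off from the polynomial representation $\boldsymbol{K}=p_{\boldsymbol{K}}(\boldsymbol{x},\boldsymbol{L})$ established in Proposition \ref{pr:CGK}, together with Proposition \ref{pr:Lpowers} for the vanishing of the Haantjes torsion. The eigenvector computation $\boldsymbol{K}X=p_{\boldsymbol{K}}(\boldsymbol{x},\lambda_i)X$ you give is literally the one appearing inside the paper's proof of Proposition \ref{pr:CGK}, so nothing further is needed.
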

Thus, given a semisimple Haantjes operator $\boldsymbol{K}$ with $s$ pointwise distinct eigenvalues
$\{l_1(\boldsymbol{x}),\ldots,l_s(\boldsymbol{x})\}$, one can always construct another Haantjes operator $\boldsymbol{L}$ fulfilling the condition \eqref{eq:KpL}, by considering a finer (or at least no coarser) decomposition than the spectral decomposition of $\boldsymbol{K}$
\[
T_{\boldsymbol{x}}M= \bigoplus_{i=1}^{s}\mathcal{D}_{l_i}(\boldsymbol{x}) \ ,
\]
according to which
\[
\boldsymbol{K}=\sum_{i=1}^s l_i \boldsymbol{P}_{i}(\boldsymbol{K}) \ .
\]
Here $\boldsymbol{P}_{i}$ are pointwise projection operators onto $\mathcal{D}_{l_i}$.
 To this aim, let us consider a local chart $\{U, (\boldsymbol{x}^1,\ldots, \boldsymbol{x}^s)\}$, adapted to the characteristic web of $\boldsymbol{K}$; thus, it fulfills Eq.~\eqref{eq:Dbase}.
By way of an example, one can consider the further decomposition
\begin{equation}
 \label{eq:finer}
\mathcal{D}_{l_i} (\boldsymbol{x}) = \bigoplus_{j_i=1}^{r_i}\mathcal{C}_{i,j_i}(\boldsymbol{x})\ , \qquad r_{i}= rank \hspace{1mm} \mathcal{D}_{l_i}
\end{equation}
in terms of \textit{one-dimensional} Lie subalgebras
\[
\mathcal{C}_{i,j_i}=
\left\langle \frac{\partial}{\partial x^{i,j_i}}\right\rangle
 \subseteq\mathcal{D}_{l_i}\ .
\]
 Then, one can construct the operator
\[
\boldsymbol{L}=\sum_{i=1}^{s} \lambda_{i,j_i}\sum_{j_i=1}^{ r_i}\frac{\partial}{\partial x^{i,j_i}} = \sum_{i=1,j_i=1}^{s,r_i} \lambda_{i,j_i} \,\boldsymbol{\pi}_{i,j_i}\ ,
\]
where $\boldsymbol{\pi}_{i,j_i}$ are  projection operators onto the subalgebras $\mathcal{C}_{i,j_i}$, and $\lambda_{i, j_i}$ are  arbitrarily chosen (but pointwise distinct) functions, numbered
according to the finer decomposition \eqref{eq:finer} of $T_{\boldsymbol{x}}M$. These functions will play the role of eigenvalues of $\boldsymbol{L}$. Consequently, we have
\[
\boldsymbol{P}_i=\sum_{j_i=1}^{r_i}\boldsymbol{\pi}_{i,j_i} \ .
\]
Thus, $p(\lambda_{i,j_i})=l_i$, $i=1,\ldots,s$, $j_i=1,\ldots, r_i$.

 Finally, as a consequence of Proposition~\ref{pr:CGK},  we have the following result.

\begin{proposition}\label{pr:CGKcoor}
Every    semisimple Abelian Haantjes algebra   $(M,\mathcal{H})$  is  cyclic  and admits a  cyclic Nijenhuis generator.
\end{proposition}
\begin{proof}
Let us consider the Haantjes chart
\begin{equation}
\label{eq:HchartKc}
\{ U, \boldsymbol{x}=(\boldsymbol{y}^{1}, \ldots,\boldsymbol{y}^v) \},
\end{equation}
adapted to the spectral decomposition
\begin{equation}
 \label{eq:TMintersK}
T_{\boldsymbol{x}}M=\bigoplus_{a=1}^v \mathcal{V}_{a} (\boldsymbol{x}) \ ,
\end{equation}
whose existence is guaranteed by Theorem \ref{TH}. Let us denote by $m$  the rank of  $(M,\mathcal{H})$.
If  $v\geq m$,
 each operator of the form
\begin{equation}
\label{eq:Ldiag}
\boldsymbol{L}=\sum_{a=1}^{v} \lambda_{a}(\boldsymbol{x})\sum_{j_{a}=1}^{r_{a}}
\frac{\partial}{\partial y^{a, j_{a}}}\otimes \rd y^{a, j_{a}}\ ,
\end{equation}
 is a cyclic Haantjes generator of $\mathscr{H}$, provided that its $v$ eigenvalue fields $\{\lambda_{1}(\boldsymbol{x}), \ldots , \lambda_{v}(\boldsymbol{x})\}$ are  arbitrary but
distinct smooth functions at any point of $U$. In fact, the eigen-distributions of the operator \eqref{eq:Ldiag} are the distributions
$\mathcal{V}_a$;
consequently, by construction they satisfy condition \eqref{eq:cPayne}.  Moreover, this operator also satisfies the assumptions of Proposition \ref{pr:CGK}.
In particular, if the eigenvalues of $\boldsymbol{L}$ are chosen to be
\begin{equation}
 \label{eq:lambdaN}
\lambda_{a}(\boldsymbol{x})=\lambda_a(y^{a,1}, \ldots, y^{a,r_a}  )\qquad\qquad a=1,\ldots, v,
\end{equation}
then $\boldsymbol{L}$ is a cyclic Nijenhuis generator, that is, its Nijenhuis torsion identically vanishes, due to Lemma \ref{lem:LN}.
If $v<m$, a cyclic generator can still be constructed, because we can further decompose some of the distributions $\mathcal{V}_{a}$, as in Eq.~\eqref{eq:Va},  into a direct sum of mutually integrable sub-distributions. Precisely, we have
\[
\mathcal{V}_a=\left \langle  \frac{\partial}{\partial y^{a, 1}},\ldots , \frac{\partial}{\partial y^{a, r_a}} \right  \rangle =\bigoplus_{i_a=1}^{\bar{r}_{i_{a}}} \left \langle   \frac{\partial}{\partial y^{a, 1}},\ldots,\frac{\partial}{\partial y^{a, i_a}}\right  \rangle =\bigoplus_{i_a=1}^{\bar{r}_{i_{a}}} \mathcal{C}_{a, i_a}\ ,
\]
with $\sum \bar{r}_{j_{a}}= r_a$, and the decomposition of $\mathcal{V}_a$ can be realized in such a way that the number of addends appearing into the direct sum
\begin{equation}
 \label{eq:TMintersKC}
T_{\boldsymbol{x}}M=\bigoplus_{a=1,i_a=1}^{v,\bar{r}_{i_{a}}} \mathcal{C}_{a, i_a}
\end{equation}
 again equals $m$.
\end{proof}

\section{Haantjes algebra for the Coulomb--Kepler potential}

In this section, we shall illustrate the novel example of the Haantjes algebra associated with a general class of integrable systems. As a particular instance of this class, we shall discuss a fundamental physical model: the Coulomb--Kepler system in the configuration space given by the affine 3D space $\mathcal{A}_3$.
 The most general Hamiltonian function separable in   spherical coordinates $\{(q_1=r, q_2=\theta,q_3=\phi)\}$ reads
\begin{equation}
 \label{eq:HKepl}
H=\frac{1}{2m}\biggl(p_r+\frac{p_\theta^2}{r^2}+\frac{p^2_\phi}{r^2 \sin^2\theta}
\biggr)+e(r)+\frac{h(\theta)}{r^2}+\frac{s(\phi)}{r^2 \sin^2 \theta} \ ,
\end{equation}
where $e(r), h(\theta), s(\phi)$ are arbitrary smooth functions of their argument.
Let us consider the  semisimple Haantjes algebra of rank three over the phase space  $T^*\mathcal{A}_3$,
generated by the identity operator  $\boldsymbol{I}$, and
\begin{eqnarray}
\label{Kepl1} \boldsymbol{K}_1&=& r^2 \sin^2\theta \biggl( \frac{\partial}{\partial \phi}\otimes \rd \phi+ \frac{\partial}{\partial p_\phi}\otimes \rd p_\phi\biggr) \ , \\
\label{Kepl2} \boldsymbol{K}_2&=& r^2\biggl(
 \frac{\partial}{\partial \theta}\otimes \rd \theta+ \frac{\partial}{\partial p_\theta}\otimes \rd p_\theta+
 \frac{\partial}{\partial \phi}\otimes \rd \phi+
 \frac{\partial}{\partial p_\phi}\otimes \rd p_\phi
 \biggr ) \ .
\end{eqnarray}
This Haantjes algebra allows us to construct  a $\omega \mathscr{H}$ algebra, in the spirit of Ref.~\cite{TT2016prepr}, and in particular the Lenard--Haantjes chain of length three formed by
\[
 \boldsymbol{I}^T \rd H=\rd H\ ,\quad  \boldsymbol{K}_1^T \rd H=\rd I_1\ ,\quad  \boldsymbol{K}_2^T \rd H=\rd I_2 \ .
\]
 It represents a chain of exact one-forms whose potential functions  are  the independent integrals of motion in \emph{involution}
\[
H,\quad I_1=\frac{p_\phi^2}{2m} +s(\phi), \quad I_2=\frac{1}{2m}\biggl(p^2_\theta+\frac{p^2_\phi+s(\phi)}{\sin^2\theta}\biggr)+h(\theta)
\]
for the Hamiltonian system with Hamiltonian function \eqref{eq:HKepl}.

 If  $e(r)=-k/r$ (for a suitable constant $k$),  $h(\theta)=s(\phi)=0$,  one gets the classical Coulomb--Kepler system. For such model, the authors of~\cite{MV} proved that a Nijenhuis recursion operator compatible with
the original symplectic structure can not exist. Indeed, the recursion operator they found provides us with a Lenard chain of integrals of motion which are not independent of the Hamiltonian \eqref{eq:HKepl}.

\section{Cyclic Haantjes operators: Example}

We shall present here an example illustrating some interesting aspects of the Haantjes geometry previously discussed.
In particular, as an application of the main theorems, we develop the procedure of constructing a Haantjes chart  and a cyclic generator for a  semisimple Haantjes algebra of rank $2$, which has been exploited in
the study of   the Jacobi--Calogero Hamiltonian system~\cite{TT2016}.

\subsection{Construction of a Haantjes chart}
 \label{ssec:CG}
Let us consider the affine 3D space $\mathcal{A}_3$, with a cartesian  coordinate system $\{O; \boldsymbol{x}=(x,y,z)\}$ and the two operators
$\boldsymbol{K}_i: TM \rightarrow TM$ defined as
\[
\boldsymbol{K}_1 := y^2 \frac{\partial}{\partial x}\otimes dx +x^2 \frac{\partial}{\partial y} \otimes dy
-xy\left(\frac{\partial}{\partial x} \otimes dy+\frac{\partial}{\partial y} \otimes dx\right) \ ,
\]
\begin{eqnarray*}
\boldsymbol{K}_2:&=&(y^2+z^2) \frac{\partial}{\partial x}\otimes dx +(x^2+z^2) \frac{\partial}{\partial y} \otimes dy+ (x^2+ y^2) \frac{\partial}{\partial z} \otimes dz \\
&-&xy\left(\frac{\partial}{\partial x} \otimes dy+\frac{\partial}{\partial y} \otimes dx\right)
-xz\left(\frac{\partial}{\partial x} \otimes dz+\frac{\partial}{\partial z} \otimes dx\right)
-yz\left(\frac{\partial}{\partial y} \otimes dz+\frac{\partial}{\partial z} \otimes dy\right) \ .
\end{eqnarray*}
They generate an Abelian Haantjes algebra $(\mathcal{A}_3, \mathscr{H})$ of rank $2$, as
\[
\boldsymbol{K}^{2}_{1}=f \boldsymbol{K}_1 \ , \qquad \boldsymbol{K}^{2}_{2} = g \boldsymbol{K}_2 \ , \qquad \boldsymbol{K}_1 \boldsymbol{K}_2 = g \boldsymbol{K}_1 \ ,
\]
where
\[
f(\boldsymbol{x})= x^2+y^2, \qquad g(\boldsymbol{x}) = x^2+y^2+z^2 \ .
\]
Note that $\boldsymbol{K}_2$ is a special case of the inertia tensor \eqref{eq:InT}, for of a single mass point $P_\gamma \equiv O$ with unitary mass, and  $n=3$. Interestingly enough, the two
operators $\boldsymbol{K}_1$ and $\boldsymbol{K}_2$ are the projections onto the  space $\mathcal{A}_3$ of the operators \eqref{Kepl1} and \eqref{Kepl2}, defined over $T^*\mathcal{A}_3$, which are associated with the  model \eqref{eq:HKepl}. The spectra of $\boldsymbol{K}_1$ and
$\boldsymbol{K}_2$ are
\begin{align*}
 Spec(\boldsymbol{K}_1)&=&\{l_1^{(1)}=f, l_2^{(1)}=0\}\\
 Spec(\boldsymbol{K}_2)&=&\{l_1^{(2)}=0, l_2^{(2)}=g \} \ ,
\end{align*}
and their  eigen-distributions read
\begin{align*}
&& \mathcal{D}_1^{(1)}=\langle Y_1\rangle\ ,\quad \mathcal{D}_2^{(1)}=\langle Y_2,Y_3\rangle \ ,
\quad Y_1 := -y\frac{\partial}{\partial x} +x\frac{\partial}{\partial y} \ ,
Y_2 := x\frac{\partial}{\partial x} +y\frac{\partial}{\partial y} \ ,
Y_3 := \frac{\partial}{\partial z} \ ,\\
&& \mathcal{D}_1^{(2)}=\langle Z_1\rangle\ ,\quad \mathcal{D}_2^{(2)}=\langle Z_2,Z_3\rangle \ ,\quad
Z_1 := Y_2+z Y_3 \ ,
Z_2 := Y_1\ ,
Z_3 := -z\frac{\partial}{\partial x} +x\frac{\partial}{\partial z} \ .
\end{align*}
We can construct Haantjes charts for the operators $\boldsymbol{K}_1$ and $\boldsymbol{K}_2$ by determining coordinates adapted to the decomposition \eqref{eq:TVa}.
To this aim, we observe that
\begin{align*}
&& \mathcal{D}_1^{(1)} \cap \mathcal{D}_1^{(2)}=\langle 0\rangle \ ,
\quad
\mathcal{D}_1^{(1)} \cap \mathcal{D}_2^{(2)}=\langle Z_2\rangle \ ,
\quad
\mathcal{D}_2^{(1)} \cap \mathcal{D}_1^{(2)}=\langle Z_1\rangle \ ,\\
&& \mathcal{D}_2^{(1)} \cap \mathcal{D}_2^{(2)}=\langle Z_4\rangle  \ \ ,\quad Z_4=-xz\frac{\partial}{\partial x} -yz\frac{\partial}{\partial y}+(x^2
+y^2) \frac{\partial}{\partial z} \ .
\end{align*}
Therefore, the tangent   spaces can be decomposed as
\begin{equation}
 \label{eq:ExaSD}
T_{\boldsymbol{x}}M= \langle Z_1 \rangle (\boldsymbol{x}) \oplus \langle Z_2\rangle (\boldsymbol{x}) \oplus \langle Z_4\rangle  (\boldsymbol{x}) \ .
\end{equation}
A set of local coordinates adapted  to such decomposition is given (for $ x\neq0$) by the functions
\begin{equation}
\label{eq:coorad}
\begin{split}
x_1&=\frac{y}{x} \quad\Rightarrow \rd x_1\in  {\Bigl(\langle Z_1\rangle \oplus \langle Z_4\rangle\Bigr)}^\circ  \\
x_2&=\sqrt{x^2+y^2+z^2}\quad \Rightarrow  \rd x_2\in {\Bigl(\langle Z_2\rangle \oplus\langle Z_4\rangle \Bigr)}^\circ \\
x_3&= \frac{z}{\sqrt{x^2+y^2+z^2}}\quad \Rightarrow\rd x_3 \in  {\Bigl(\langle Z_1\rangle \oplus \langle Z_2\rangle \Bigr)}^\circ \ .
\end{split}
\end{equation}
Such coordinates  are related to spherical coordinates $(r, \theta, \phi)$ in $\mathcal{A}_3$ as
\[
x_1=\tan\phi \ , \quad x_2=r \ ,\quad x_3=\cos\theta \ .
\]
They are characteristic functions of the spherical web whose fibers are the three foliations generated by:
\begin{itemize}
\item
$ \langle Z_1\rangle \oplus \langle Z_4\rangle$, half-planes issued from  the $z$ axis;
\item
$\langle Z_2\rangle \oplus\langle Z_4\rangle$, spheres centered in $O$;
\item
$\langle Z_1\rangle \oplus \langle Z_2\rangle$ one-folded circular cones with axis $z$ and vertex $O$.
\end{itemize}
In these coordinates, the Haantjes operators $ \boldsymbol{K}_1$ and $ \boldsymbol{K}_2$ take the diagonal form
\begin{align*}
&& \boldsymbol{K}_1=x_2^2(1-x_3^2 )\frac{\partial}{\partial x_1}\otimes dx_1 \\
&& \boldsymbol{K}_2=x_2^2\left(\frac{\partial}{\partial x_1}\otimes dx_1+ \frac{\partial}{\partial x_3}\otimes dx_3\right) \ .
\end{align*}
A cyclic Haantjes generator (admitting three distinct eigenvalues) for the  Haantjes algebra $\mathscr{H}$ is given by
\begin{equation}
\label{eq:LK}
 \boldsymbol{L}= \boldsymbol{K}_1+\boldsymbol{K}_2 \ ,
\end{equation}
whose spectrum is
\[
Spec(\boldsymbol{L})=\{ \lambda_1=-x_2^2(x_3^2-2) ,\, \lambda_2=0,\, \lambda_3= x_2^2\} \ .
\]
The corresponding eigen-distributions are $\langle Z_2\rangle$, $\langle Z_1\rangle$, $\langle Z_4\rangle$, respectively. In fact
\[
 \boldsymbol{K}_1=
 (1+\frac{\lambda_3}{\lambda_1})\,\boldsymbol{L}-\frac{1}{\lambda_1}\,\boldsymbol{L}^2 \ ,
\]
and
\[
 \boldsymbol{K}_2=
 -\frac{\lambda_3}{\lambda_1}\,\boldsymbol{L}+\frac{1}{\lambda_1}\,\boldsymbol{L}^2 \ .
\]
Thus, $\mathscr{H}$ turns out to be a Haantjes subalgebra  of the  cyclic Haantjes algebra of rank three
\begin{equation}
\label{eq:3basis}
\mathcal{L}(\boldsymbol{L})= \left \langle  \boldsymbol{I} , \boldsymbol{L},\boldsymbol{L}^2 \right  \rangle  =\langle \boldsymbol{I},\boldsymbol{K}_1, \boldsymbol{K}_2 \rangle.
\end{equation}

The eigenvalues of $\boldsymbol{L}$ are such  that $\boldsymbol{L}$, in  the original cartesian coordinates $(x,y,z)$,
 takes a simple polynomial form. Precisely, we have
\begin{eqnarray*}
\boldsymbol{L}&=&
\left(2y^2+z^2\right) \frac{\partial}{\partial x}\otimes dx +
\left(2x^2+z^2\right) \frac{\partial}{\partial y}\otimes dy +
\left(x^2+y^2 \right)  \frac{\partial}{\partial z}\otimes dz  \\
&&
-2xy\left(\frac{\partial}{\partial x}\otimes dy+\frac{\partial}{\partial y}\otimes dx \right )
-xz \left(\frac{\partial}{\partial x}\otimes dz+\frac{\partial}{\partial z}\otimes dx \right ) \\
&&
-y z \left(\frac{\partial}{\partial y}\otimes dz+\frac{\partial}{\partial z}\otimes dy \right ) \ .
\end{eqnarray*}
The cyclic Haantjes algebra with basis \eqref{eq:3basis} also admits cyclic Nijenhuis generators, as for instance the operator
\[
\boldsymbol{N}=x_1\frac{\partial}{\partial x_1}\otimes dx_1+x_2^2 \frac{\partial}{\partial x_2}\otimes dx_2+x_3^2 \frac{\partial}{\partial x_3}\otimes dx_3 \ .
\]
However, although it takes a  rational form in the original cartesian coordinates $\{x,y,z\}$, the expression of this operator turns out to be much more complicated than that of $\boldsymbol{L}$. In fact,  the numerators of its components  are polynomials up to   degree 9. This shows that exploiting Haantjes (with non-vanishing Nijenhuis torsion)  cyclic generators can be computationally more convenient than using Nijenhuis cyclic generators.

\section{A comparison with other algebraic structures: Haantjes manifolds and Killing--St\"ackel algebras}

It could be useful to compare our definition of Haantjes algebras over a differentiable manifold, proposed in Section 5, with the notion of \textit{Haantjes manifolds}
recently introduced by Magri~\cite{MGall13}. The main difference between the two constructions
resides in their distinct degree of generality, which obviously reflects the different motivations underlying the two approaches.

In our construction of Haantjes algebras, we are mainly concerned with the abstract, general theory of commuting Haantjes operators, without any reference to additional geometric
structures like exact $1$-forms or symmetry vector fields, that in Magri's theory are essential in order to define  Lenard complexes of commuting vector fields or exact $1$-forms~\cite{MGWDVV,MGall15}.

In other words, although Magri's Haantjes manifolds possess a richer axiomatic structure than our Haantjes algebras, we prefer to articulate our definition on a minimal number of requirements, in order to have a flexible structure,
which in a subsequent step can be made  suitable for the study of  more specific problems, as the separation of variables in the context of integrability or Riemannian geometry. Keeping with this philosophy, we postpone
the introduction of additional geometric structures (as Haantjes chains of exact $1$-forms, symplectic forms~\cite{TT2016prepr}, Poisson bivectors~\cite{TT2017} or Riemannian metrics~\cite{TT2016})
to a further stage of the theory.

Another relevant aspect is that the notion of \textit{Killing--St\"ackel algebra} on an $n$-dimensional Riemannian manifold, due to Benenti et al. ~\cite{BCR02}, can be naturally interpreted in terms of Haantjes algebras of rank $n$.
 In order to compare the two notions, it is useful to
observe that the cyclic  generator \eqref{eq:LK} enjoys a special property:  In fact,
its contravariant form is a  Killing two-tensor  with respect to the Euclidean metric of the affine space $\mathcal{A}_3$. So, we shall call it a \textit{Killing--Haantjes} cyclic generator;
it can be identified with a \emph{characteristic tensor} (CKT) of the \textit{Killing--St\"ackel algebra} \eqref{eq:3basis}.
Thus, the theory of Haantjes cyclic algebras over a Riemannian manifold makes contact with the notion of  Killing--St\"ackel algebras, which offer a geometrical setting for the classical
theory of separation of variables for Hamiltonian systems going back to Eisenhart, St\"ackel, Jacobi, etc.
The main difference between the two algebraic structures is that Killing--St\"ackel
algebras are \emph{vector spaces}  of Killing two-tensors over $\mathbb{R}$. Instead, Haantjes algebras are \emph{modules} over $C^{\infty}(M)$ of Haantjes operators.

At the same time, it is interesting to notice that, starting from a Killing--Haantjes
cyclic generator, one can choose suitable functions to generate other Killing--Haantjes two-tensors, that is, elements of Killing--St\"ackel algebras. The conditions which
such functions must obey are under investigation.

To summarize the comparison among these geometric structures, we can distinguish three different scenarios.

\textit{(i)} When dealing with Killing--St\"ackel algebras,  we are realizing a specific Haantjes algebra with constant coefficients,
which does not always  possess  a cyclic Killing--Haantjes generator. In  fact,  although a CKT does exist, it is not always a cyclic generator, since
to reconstruct the full algebra one would need to combine the powers of the CKT with suitable functions.
Instead, in  Killing--St\"ackel algebras  only linear combinations with constant coefficients are allowed (by definition).

\textit{(ii)} One can consider a generalization of the Killing--St\"ackel  algebra, obtained combining the powers of the CKT by means of functions. In this
case one obtains a larger algebra, which is a full cyclic Haantjes algebra, defined over the ring of smooth functions.

\textit{(iii)} The most general case is that of Haantjes algebras that do not come from  Killing tensors.

\section*{Acknowledgments}
The authors wish to thank F. Magri for having drawn our attention to the theory of Haantjes manifolds and N. Kamran for discussions, a reading of the manuscript and encouragement.

We also thank heartily Y. Kosmann-Schwarzbach for letting us know the  preprint \cite{YKS} before its publication.

 The research of P. T. has been supported by the research project PGC2018-094898-B-I00, Ministerio de Ciencia, Innovaci\'on y Universidades, Spain, and by the ICMAT Severo Ochoa project SEV-2015-0554, Ministerio de Ciencia, Innovaci\'on y Universidades, Spain.
   P. T. and G. T. are members of Gruppo Nazionale di Fisica Matematica (GNFM) of INDAM.




\end{document}